\newcommand{\df}{\emph}
\newcommand{\subrelated}{\textbf}
\DeclareMathOperator{\poly}{poly}
\DeclareMathOperator{\polylog}{polylog}
\newcommand{\newextmathcommand}[2]{%
    \newcommand{#1}{\ensuremath{#2}\xspace}
}
\newextmathcommand{\A}{\mathcal{A}}
\newextmathcommand{\Net}{\mathcal{N}}
\newextmathcommand{\Graph}{G_\Net}
\newcommand{\sources}{\mathsf{in}}
\newcommand{\sinks}{\mathsf{out}}
\newcommand*{\orparam}[1]{\mathsf{OR_{#1}}}
\newcommand*{\ortwo}{\mathsf{OR_2}}
\newcommand*{\orany}{\mathsf{OR}}
\newcommand*{\rankor}{\mathsf{rk_\lor}}
\newcommand*{\orrank}{\rankor}
\newcommand*{\frrank}{\mathsf{rk_\lor^*}}
\newcommand*{\sumparam}[1]{\mathsf{SUM_{#1}}}
\newcommand*{\sumany}{\mathsf{SUM}}
\newcommand*{\sumtwo}{\mathsf{SUM_2}}
\newcommand*{\indices}[1]{\{1, \ldots, #1\}}
\newextmathcommand{\setn}{\indices{n}}
\newextmathcommand{\setm}{\indices{m}}
\newextmathcommand{\setk}{\indices{k}}
\newcommand*{\indicez}[1]{\{0, \ldots, #1\}}
\newextmathcommand{\kbits}{\indicez{n - 1}}
\newextmathcommand{\Bin}{\{0, 1\}}
\newextmathcommand{\OR}{\mathrm{OR}}
\newextmathcommand{\NOT}{\mathrm{NOT}}
\newextmathcommand{\AND}{\mathrm{AND}}
\newextmathcommand{\SUM}{\mathrm{SUM}}
\newextmathcommand{\Family}{\mathcal F}
\newextmathcommand{\SubFamily}{\mathcal F'}
\newextmathcommand{\OptCover}{\tau}
\newextmathcommand{\OptFrac}{\tau^*}
\newextmathcommand{\LT}{\textsc{Less-Than}}
\newcommand*{\compl}{\overline}
\newcommand{\N}{\mathbb N}
\newcommand*{\eps}{\varepsilon}
\newcommand{\sset}{\subseteq}
\newcommand{\mynote}[1]{}
\newcommand{\nvts}{\negthickspace\negthickspace\negthickspace}
\newcommand{\seqnum}[1]{\href{http://oeis.org/#1}{\underline{#1}}}
\newtheorem{myremark}[theorem]{Remark}
\def\isetn{{\cal I}_n}
\theoremstyle{remark}
\newtheorem*{claim*}{Claim}
\title{Fractional coverings, greedy coverings, and rectifier networks}
\author[1]{Dmitry Chistikov}
\author[2]{Szabolcs Iv\'an}
\author[3]{Anna Lubiw}
\author[3]{\mbox{Jeffrey Shallit}}
\affil[1]{Max Planck Institute for Software Systems (MPI-SWS), Germany, \texttt{dch@mpi-sws.org}}
\affil[2]{University of Szeged, Hungary, \texttt{szabivan@inf.u-szeged.hu}}
\affil[3]{School of Computer Science, University of Waterloo, Canada, \texttt{\{alubiw,shallit\}@cs.uwaterloo.ca}}
\begin{document}

\makeatletter
\renewcommand*{\@Copyright}{}
\makeatother

\maketitle

\begin{abstract}
A rectifier network is a directed acyclic graph with distinguished sources and sinks;
it is said to compute a Boolean matrix $M$ that has a~$1$ in the entry~$(i,j)$
iff there is a path from the $j$th source to the $i$th sink.
The smallest number of edges in a rectifier network that computes $M$
is a classic complexity measure on matrices, which has been studied for more than half a century.

We explore two well-known techniques that have hitherto found little to no applications
in this theory.
Both of them build upon a basic fact that depth-$2$ rectifier networks are essentially
weighted coverings of Boolean matrices with rectangles.
We obtain new results
by using \emph{fractional} and \emph{greedy} coverings (defined in the standard way).

First, we show that all \emph{fractional} coverings of the so-called full triangular matrix
have cost at least $n \log n$.
This provides (a fortiori) a new proof of the tight lower bound
on its depth-$2$ complexity (the exact value has been known since 1965, but
previous proofs are based on different arguments).
Second, we show that the \emph{greedy} heuristic is instrumental in tightening the upper
bound on the depth-$2$ complexity of the Kneser-Sierpi\'nski (disjointness) matrix.
The previous upper bound is $O(n^{1.28})$, and we improve it to $O(n^{1.17})$, while
the best known lower bound is $\Omega(n^{1.16})$.
Third, using \emph{fractional} coverings,
we obtain a form of direct product theorem that gives a lower bound on unbounded-depth complexity
of Kronecker (tensor) products of matrices.
In this case, the \emph{greedy} heuristic shows (by an argument due to Lov\'asz)
that our result is only a logarithmic factor away from the ``full''
direct product theorem.
Our second and third results
constitute progress on open problem~7.3 and
resolve, up to a logarithmic factor, open problem~7.5 from a recent book by
Jukna and Sergeev (in~Foundations and Trends in Theoretical Computer Science~(2013)).

\end{abstract}

\section{Introduction}

Introduced in the 1950s,
\df{rectifier networks} are one of the oldest and most basic models
in the theory of computing.
They are directed acyclic graphs with distinguished input and output
nodes; a rectifier network is said to \df{compute} (or \df{express})
the Boolean matrix $M$ that has a~$1$ in the entry~$(i,j)$ iff there is a path
from the $j$th input to the $i$th output.
Equivalently, rectifier networks can be viewed as Boolean circuits that consist
entirely of \OR gates of arbitrary fan-in.
This simple model of computation has attracted a lot of attention~\cite{js13},
because it captures the ``topological'' core of other models:
complexity bounds for rectifier networks
extend in one way or another to Boolean circuits (i.e., circuits with Boolean gates)
and to switching circuits~\cite{nechiporuk2,Mehlhorn79}.

Given a matrix $M$, what is the smallest number of edges in a rectifier network
that computes $M$? Denote this number by $\orany(M)$---this is a complexity
measure on Boolean matrices.
This measure is fairly well understood:
we know, from Nechiporuk~\cite{nechiporuk},
that the maximum of $\orany(M)$ grows as $n^2 / 2 \log n$ as $n \to \infty$
if $M$ is $n \times n$;
we also know that random $n \times n$-matrices have complexity very
close to $n^2 / 2 \log n$.
The ``shape'' of these two facts is reminiscent of
the standard circuit complexity of Boolean functions over \AND, \OR, and \NOT gates---%
but for them, the maximum is $2^n / n$
instead of $n^2 / 2 \log n$.

However, much more is known about the measure $\orany(\cdot)$:
there are explicit sequences of matrices that have complexity $n^{2 - o(1)}$,
close to the maximum
(in contrast, for circuits over AND, OR, and NOT gates, exhibiting a single sequence
of functions that require a superlinear number of gates would be a tremendous
breakthrough).
%
In fact, nowadays a range of methods are available for obtaining upper
and lower bounds on $\orany(M)$ for specific matrices $M$;
we refer the interested reader to the recent book by Jukna and Sergeev~\cite{js13}.

Many natural questions, however, remain open.
Jukna and Sergeev list 19 open problems about $\orany(\cdot)$
and related complexity measures.
Several of them refer to very restricted submodels, such as
rectifier networks of depth~$2$: that is, networks where all paths
contain (at most) $2$~edges.
A depth-$2$ rectifier network expressing a matrix $M$
is essentially a \df{covering} of $M$---a collection of
(rectangular) all-$1$ submatrices of $M$ whose disjunction is~$M$.
In our work, we look into the corresponding complexity measure~$\ortwo(\cdot)$
as well as~$\orany(\cdot)$. We build upon the connection between rectifier networks
and (weighted) set coverings and
explore two well-known ideas that have previously found
few applications in the study of rectifier networks:
they are associated with fractional and greedy coverings respectively.

\emph{Fractional coverings}
are a generalization of usual set coverings.
In the usual set cover problem, each set~$S$ can be either included or not
included in the solution (i.e., in the covering);
in the fractional version each set can be partially included:
a solution assigns to each set~$S$ a real number $x_S \in [0; 1]$,
and for every element~$s$ of the universe the sum $\sum_{s \in S} x_S$
should be equal to or exceed~$1$.
In other words, fractional coverings
arise
from linear relaxation of the integer program that expresses the set cover problem.
\emph{Greedy coverings} are, in contrast, usual coverings;
they are the outcome of applying the standard greedy heuristic
to an instance of the set cover problem:
at each step, the algorithm picks a set~$S$ that
covers the largest number of yet uncovered elements~$s$.
In our work, we use fractional and greedy coverings
to obtain estimates on the values of $\ortwo(M)$
and $\orany(M)$.

\subsubsection*{Our results}
%
First, we demonstrate that
$\ortwo(T_n) = n(\lfloor \log_2 n \rfloor + 2) - 2^{\lfloor \log_2 n \rfloor + 1}$,
where $T_n$ is the so-called full triangular matrix: an upper-triangular
matrix that has $1$s everywhere above the main diagonal and $0$s on the diagonal and below.
In this problem, the upper bound is easy and the challenge is to prove the lower bound.
This was previously done by Krichevskii~\cite{kr-pk65}, and our paper provides a different
proof of independent interest. In fact, we prove a stronger statement:
all \emph{fractional} coverings of $T_n$ have large associated cost
(Theorem~\ref{th:fft}).
To this end, we take the linear program that expresses the fractional set cover problem and
find a good feasible solution to the dual program.
The value of this solution then gives a lower bound on the cost of all feasible
solutions to the primal---that is, on the cost of fractional coverings.
Since integral coverings are just a special case of fractional coverings,
the result follows.

Second, we improve the upper bound on the value of $\ortwo(D_n)$, where $D_n$ is
the disjointness matrix, also known as the Kneser-Sierpi\'nski matrix.
This constitutes progress on open problem~7.3
in Jukna and Sergeev's book~\cite{js13}, where the previously known
bounds are obtained.
The previous upper bound is $O(n^{1.28})$,
and our Theorem~\ref{th:ks} improves it to $O(n^{1.17})$, while
the best known lower bound is $\Omega(n^{1.16})$.
To achieve this improvement, we subdivide the instance of the weighted set cover problem
(in which the optimal value is $\ortwo(D_n)$)
into $\polylog(n)$ natural subproblems
and reduce them, by imposing an additional restriction, to
instances of unweighted set cover problems.
We then solve these instances with the \emph{greedy} heuristic;
the upper bound in the analysis invokes
the so-called greedy covering lemma by Sapozhenko~\cite{sap72},
also known as the Lov\'asz--Stein theorem~\cite{lovaszDM,stein}.
This gives us the desired upper bound on $\ortwo(D_n)$;
in fact, the greedy strategy turns out to be optimal, and
the optimal exponent in $\ortwo(D_n)$ comes from a numerical optimization problem.
As an intermediate result we determine, up to a polylogarithmic factor,
the value of $\ortwo(D^m_k)$
where $D^m_k$ is the adjacency matrix of the Kneser graph
on $2 \binom k m$ vertices.

Finally, we obtain (Theorem~\ref{th:dirprod})
a form of direct product theorem for the $\orany(\cdot)$ measure:
$\orany(K \otimes M) \ge \frrank(K) \cdot \orany(M)$.
Here $K \otimes M$ denotes the Kronecker product of matrices $K$ and $M$,
and $\frrank(K)$ is a fractional analogue of the Boolean rank of $K$.
This resolves, up to a logarithmic factor,
open problem~7.5 in the list of Jukna and Sergeev~\cite{js13},
which asks for the lower bound of $\orrank(K) \cdot \orany(M)$
where $\orrank(K) \ge \frrank(K)$ is the Boolean rank of~$K$.
(In fact, a~related question for unambiguous rectifier networks, or \SUM-circuits,
is originally due to Find~et~al.~\cite{find13};
our technique applies to this model as well,
giving an analogous inequality for the measure~$\sumany(\cdot)$,
see Corollary~\ref{cor:dirprod}.)
Suppose $K$ is an $m\times n$ matrix;
then, by the argument due to Lov\'asz~\cite{lovasz}, the \emph{greedy} 
heuristic shows that
$\frrank(K) \ge \orrank(K) / (1 + \log m n)$,
so our lower bound is indeed at most a logarithmic factor away from
the ``full'' direct product theorem.
To prove our lower bound, we take the linear programming formulation of
the \emph{fractional} set cover problem for the matrix $K$ and use components
of the optimal solution to the dual program to guide our argument.
It is interesting to see how
reasoning about coverings, or, equivalently, about depth-$2$ rectifier networks,
enables us to obtain meaningful lower bounds on the size of rectifier networks that have
unbounded depth.

\section{Discussion and related work}

We use the matrix language in this paper, but all results can be restated
in terms of biclique coverings of bipartite graphs.

The \subrelated{$\ortwo$-complexity of full triangular matrices}, $T_n$,
is tightly related to results on biclique coverings of complete undirected (non-bipartite) graphs
from the early days of the theory of computing.
The $n \log n$ lower bound, in one form or another, was known
to Hansel~\cite{hansel64}, Krichevskii~\cite{kr-pk65},
Katona and Szemer\'edi~\cite{ks67}, and Tarj\'an~\cite{tarjan75}.%
\footnote{Not all of these arguments compute the \emph{exact} value of $\ortwo(T_n)$.}
Apart from purely combinatorial considerations, the interest in this problem is
motivated by its applications
in formula and switching-circuit complexity of the Boolean threshold-$2$ function
(which takes on the value~$1$ if and only if at least two of its inputs are set to~$1$).
For more context, see treatments by Radhakrishnan~\cite{r-entropy} and Lozhkin~\cite{lozhkin05}.
Our lower bound is obtained in a slightly more restrictive setting, because of
explicit asymmetry:
for $\ortwo(T_n)$, one needs to cover entries $(i, j)$ with $i < j$ in the matrix;
in biclique coverings of undirected graphs, it suffices to cover either of $(i, j)$ and $(j, i)$.
Nevertheless, to the best of our knowledge,
ours is the only proof that goes via linear programming (LP) duality
and provides a tight lower bound on the size of \emph{fractional} coverings.
This result is new;
we are not aware of other lower bounds for rectifier networks that
come from feasible solutions to the LP dual
(in approximation algorithms, a related technique is known under the name
of ``dual fitting''~\cite[Section~9.4]{w-book}).

As for the \subrelated{greedy heuristics}, we are not the first to use them
in the context of depth-$2$ rectifier networks.
Andreev~\cite{a95} obtained a tight worst-case upper bound for
a class of matrices potentially containing ``wildcard'' entries ($*$).
This upper bound is
in terms of the number of occurrences of $0$s and $1$s,
provided that these numbers satisfy certain conditions as the matrix size tends to infinity.
Our Theorem~\ref{th:ks}, however, does not follow from Andreev's worst-case bound.
The disjointness matrix, $D_n$, which we apply this technique to,
is a well-studied object in communication complexity~\cite{kushnisan};
it is a discrete version of the Sierpi\'nski triangle.
Boyar and Find~\cite{bf15} and Selezneva~\cite{selezneva13}
proved that $\orany(D_n) = \Theta(n \log n)$ and
$\sumany(D_n) = \frac{1}{2} n \log n$.%
\footnote{Recall that the $\sumany(\cdot)$ measure
corresponds to \df{unambiguous} rectifier networks,
in which every input-output pair is connected
by at most one path; or, equivalently, to arithmetic circuits over nonnegative
integers with addition (\SUM) gates. For any matrix $M$, $\orany(M) \le \sumany(M)$
and $\ortwo(M) \le \sumtwo(M)$.}
In depth~2, the previous bounds are due to Jukna and Sergeev~\cite{js13};
it is unknown if greedy heuristics are also of use for \SUM-circuits,
as our upper bound for $D_n$ does not extend to this model
(our coverings are not partitions).

\subrelated{Direct sum and direct product theorems} in the theory of computing
are statements of the following form:
when faced with several instances of the same problem on different independent
inputs, there is no better strategy than solving each instance independently.%
\footnote{In some contexts, the terms ``direct sum theorem'' and
    ``direct product theorem'' have slightly different meanings~\cite{sherstov},
    but in the current context we do not distinguish between them.}
For rectifier networks, these questions are associated with
the complexity of Kronecker (tensor) products of matrices.
Indeed, denote the $k \times k$-identity matrix by $I_k$, then
$I_k \otimes M$ is the block-diagonal matrix with $k$ copies of $M$
on the diagonal.
It is not difficult to show that
$\orany(I_k \otimes M) \ge k \cdot \orany(M)$,
and a natural generalization asks whether
$\orany(K \otimes M) \ge \orrank(K) \cdot \orany(M)$
for any matrix $K$---%
see Find et~al.~\cite{find13} and
Jukna and Sergeev~\cite[Sections~2.4, 3.6, and open problem~7.5]{js13}.
To date, this inequality is only known to hold in special cases.
For example,
Find et al.~\cite{find13} can show this lower bound when
the matrix $K$ has a fooling set of size $\orrank(K)$;
however, the size of the largest fooling set does not approximate
the Boolean rank, as observed, e.g., by Gruber and Holzer~\cite{GruberH06}
(they use the graph-theoretic language, with bipartite
 dimension instead of $\orrank$).
As another example, denote by $|M|$ the number of $1$s in the matrix $M$ and assume
that $M$ has no all-$1$ submatrices of size $(k+1)\times(l+1)$.
Then the inequality $\orany(M)\geq |M| / k l$ is a well-known lower bound
due to Nechiporuk~\cite{nechiporuk2}, subsequently rediscovered by
Mehlhorn~\cite{Mehlhorn79}, Pippenger~\cite{Pippenger80}, and Wegener~\cite{Wegener80};
Jukna and Sergeev~\cite[Theorem~3.20]{js13} extend it to
$\orany(K\otimes M)\geq\orrank(K) \cdot |M| / k l$ for any square matrix~$K$.
To the best of our knowledge, the current literature has
no stronger lower bounds
on the $\orany$-complexity of Kronecker products;
our Theorem~\ref{th:dirprod} comes logarithmically close to the desired bound.
For $\sumany$-complexity, the state of the art and our contribution
are analogous to the $\orany$-case.
The related notion of a fractional biclique cover
has previously appeared, e.g., in the papers of Watts~\cite{Watts06}
and Jukna and Kulikov~\cite{JuknaK09}.

Also related to our work is the study of the size of smallest biclique
coverings, under the name of the bipartite dimension of a graph
(as opposed to the cost of such coverings and the $\ortwo$-complexity;
 see Section~\ref{s:pre}).
This quantity corresponds to the Boolean rank of a matrix and is known
to be PSPACE-hard to compute~\cite{GruberH06} and NP-hard to approximate to within
a factor of $n^{1 - \eps}$~\cite{ChalermsookHHK14}.
Finally, we note that results on $\ortwo$-complexity have corollaries
for \subrelated{descriptional complexity} of regular languages.
Indeed, take a language where all words have length two,
$L\subseteq\Sigma \cdot \Delta$,
with $\Sigma=\{a_1, \ldots, a_m\}$ and $\Delta=\{a_1, \ldots, a_n\}$.
Let $M^L$ be its characteristic $m\times n$ matrix:
$M_{i,j}^L=1$ iff $a_i \cdot a_j\in L$.
Then $\ortwo(M^L)$ coincides with the alphabetic length of the shortest
regular expression for $L$;
for example, it follows from Corollary~\ref{cor:ft} that
the optimal regular expression for the language $L_n = \{ a_i a_j \mid 1 \leq i < j \leq n \}$
has $n(\lfloor \log_2 n \rfloor + 2) - 2^{\lfloor \log_2 n \rfloor + 1}$ occurrences of
letters ($\Sigma = \Delta = \{ a_1, \ldots, a_n \}$).
The values of $\orany(M^L)$ and $\ortwo(M^L)$ are also related
to the size of the smallest nondeterministic
finite automata accepting $L$; see~\cite{ivanlelkesnagygyorgyturan14}
and Appendix for details.

\section{Rectifier networks and coverings}
\label{s:pre}


\subsubsection*{Rectifier networks}
Define a \df{rectifier network} with $m$ inputs and $n$ outputs
as a 4-tuple $\Net = (V, E, \sources, \sinks)$, where
$V$ is a set of vertices,
$E \sset V^2$ a set of edges such that the directed graph $\Graph = (V, E)$ is acyclic, and
$\sources \colon \setn \to V$ and
$\sinks \colon \setm \to V$ are injective functions
whose images contain only sources (and, respectively, only sinks) of \Graph.
The network \Net is said to have \df{size}~$|E|$.

A rectifier network \Net \df{expresses} a Boolean $m \times n$ matrix $M = M(\Net)$
such that $M_{i j} = 1$ if \Graph contains a directed path from $\sources(j)$
to $\sinks(i)$ and $M_{i j} = 0$ otherwise.
A rectifier network \Net is said to have \df{depth}~$d$ if
all maximal paths in \Graph have exactly $d$ edges.
Given a Boolean matrix $A \in \Bin^{m \times n}$,
let $\ortwo(A)$ denote the smallest size of a depth-$2$ rectifier network
that expresses $A$
and let $\orany(A)$ denote the smallest size of any rectifier network
that expresses $A$.

This notation is justified by the following observation.
A rectifier network \Net may be viewed as a circuit:
its Boolean inputs are located at the vertices $\sources(\setn)$,
and gates at all other vertices
compute the disjunction (Boolean \OR) of their inputs.
From this point of view, the circuit computes a linear operator
over the monoid $(\Bin, \OR)$, and the matrix of this linear operator
is exactly the Boolean matrix expressed by the rectifier network \Net.

\begin{figure}
\centering
\subfloat[Rectifier network of depth 3]{
\label{f:example:network}
\footnotesize
\begin{tikzpicture}%
[
    thick,
    scale=0.6,
    >=stealth,
    dot/.style={
        circle,
        scale=0.35,
        outer sep=2.0
    }
]
\foreach \i in {1,...,8}{
\node[dot] (a\i) [label=below:\i] at (\i,0) [fill] {};
\node[dot] (b\i) [label=above:\i]at (\i,3) [fill] {};
}
\node[dot] (n1) at (2.5,1) [fill] {};
\node[dot] (n2) at (2.5,2) [fill] {};
\node[dot] (n3) at (6.5,1) [fill] {};
\node[dot] (n4) at (6.5,2) [fill] {};
\foreach \i in {1,...,4}{
\path[->] (b\i) edge (n2);
\path[->] (n1) edge (a\i);
}
\foreach \i in {5,...,8}{
\path[->] (b\i) edge (n4);
\path[->] (n3) edge (a\i);
}
\path[->] (n2) edge (n1);
\path[->] (n4) edge (n1);
\path[->] (n4) edge (n3);

\end{tikzpicture}
}
\subfloat[Matrix $B$]{
\raisebox{\depth}{
\makebox[10em][c]{
\label{f:example:matrix}
$B = \left(\begin{smallmatrix}
     1 & 1 & 1 & 1 & 1 & 1 & 1 & 1 \\
     1 & 1 & 1 & 1 & 1 & 1 & 1 & 1 \\
     1 & 1 & 1 & 1 & 1 & 1 & 1 & 1 \\
     1 & 1 & 1 & 1 & 1 & 1 & 1 & 1 \\
     0 & 0 & 0 & 0 & 1 & 1 & 1 & 1 \\
     0 & 0 & 0 & 0 & 1 & 1 & 1 & 1 \\
     0 & 0 & 0 & 0 & 1 & 1 & 1 & 1 \\
     0 & 0 & 0 & 0 & 1 & 1 & 1 & 1 \\
     \end{smallmatrix}\right)$
}
}
}
\subfloat[Rectifier network of depth 2]{
\label{f:example:depth-2}
\footnotesize
\begin{tikzpicture}%
[
    thick,
    scale=0.6,
    >=stealth,
    dot/.style={
        circle,
        scale=0.35,
        outer sep=2.0
    }
]
\foreach \i in {1,...,8}{
\node[dot] (a\i) [label=below:\i] at (\i,0) [fill] {};
\node[dot] (b\i) [label=above:\i]at (\i,3) [fill] {};
}
\node[dot] (n1) at (2.5,1.5) [fill] {};
\node[dot] (n2) at (6.5,1.5) [fill] {};
\foreach \i in {1,...,4}{
\path[->] (b\i) edge (n1);
\path[->] (n1) edge (a\i);
\path[->] (n2) edge (a\i);
}
\foreach \i in {5,...,8}{
\path[->] (b\i) edge (n2);
\path[->] (n2) edge (a\i);
}

\end{tikzpicture}
}
\caption{Illustrations for Example~\ref{our-example}}
\end{figure}

\begin{example}
\label{our-example}
A depth-3 rectifier network is shown in Figure~\ref{f:example:network}.
It expresses the matrix $B$ in Figure~\ref{f:example:matrix}, showing that
$\orparam{3}(B) \le 19$.
In fact, this network is optimal and $\orparam{3}(B) = 19$; see Appendix for details.
At the same time, $\ortwo(B) = 20$: the upper bound is achieved by the network
in Figure~\ref{f:example:depth-2}, and the lower bound is due to
Jukna and Sergeev~\cite[Theorem~3.18]{js13}.
\end{example}

\subsubsection*{Coverings of Boolean matrices}
Let us describe an alternative way of defining the function~$\ortwo(\cdot)$.
Given a Boolean matrix~$A$, a \df{rectangle} (or a $1$-rectangle)
is a pair $(R, C)$, where $R \sset \setm$ and $C \sset \setn$,
such that for all $(i, j) \in R \times C$ we have $A_{i j} = 1$.
A rectangle $(R, C)$ is said to \df{cover} all pairs $(i, j) \in R \times C$.
The \df{cost} of a rectangle $(R, C)$ is defined as $|R| + |C|$.

Suppose a matrix $A$ is fixed; then a collection of rectangles is called
a \df{covering} of $A$ if for every $(i, j) \in \setm \times \setn$ there
exists a rectangle in the collection that covers $(i, j)$.
The \df{cost} of a collection is the sum of costs of all its rectangles.

Given a Boolean matrix $A \in \Bin^{m \times n}$,
the \df{cost} of $A$ is defined as the smallest cost of a covering of $A$.
It is not difficult to show that the cost of $A$ equals $\ortwo(A)$
as defined above.

Similarly, we can think of minimizing the \emph{size} of
a covering, i.e., the number of rectangles in a collection
instead of their total cost. The smallest size of a covering of $A$
is called the \df{\OR-rank} (or the \df{Boolean rank}) of $A$, denoted $\rankor A$.

\section{Fractional and greedy coverings}
\label{s:techniques}

In the rest of the paper we interpret
the covering problems for Boolean matrices as special cases
of the general set cover problem.
In this section we recall this general setting
and present two main techniques that we apply:
linear programming duality and greedy heuristics.

An instance of the \df{(weighted) set cover} problem consists of
a set $U$, a family of its subsets, $\Family \sset 2^U$, and
a weight function, which is a mapping $w \colon \Family \to \N$.
Every set $S \in \Family$ is said to \df{cover} all elements
$s \in S \sset U$.
The goal is to find a subfamily $\SubFamily \sset \Family$
that is a \df{covering} (i.e., it covers all elements from $U$:
$\bigcup_{S \in \SubFamily} S = U$)
and has the smallest possible total weight (i.e., it minimizes the functional
$\sum_{S \in \SubFamily} w(S)$ amongst all coverings).
In the \df{unweighted} version of the problem, $w(S) = 1$
for all $S \in \Family$, so the total weight of a covering
is just its \df{size} (number of elements in \SubFamily).
In both versions, \Family is usually assumed to be a feasible
solution, which means that every $s \in U$ belongs to at least one
set from \Family: that is, $\bigcup_{S \in \Family} S = U$.

It is instructive, throughout this section, to have
particular instances of the set cover problem in mind,
namely those of covering Boolean matrices with rectangles
as in Section~\ref{s:pre}.
In the following sections, we refer to them as \df{weighted} and \df{unweighted}
\df{set covering formulations};
their optimal solutions correspond to the values of $\ortwo(A)$ and $\rankor A$ respectively.

\subsubsection*{Fractional coverings}
The set cover problem can easily be recast as an integer program:
see Figure~\ref{f:prog:integer}.
For each $S \in \Family$, this program has an integer variable $x_S \in \Bin$:
the interpretation is that $x_S = 1$ if and only if $S \in \SubFamily$,
and the constraints require that every element is covered.
\df{Feasible} solutions are in a natural one-to-one correspondence
with coverings of $U$,
and the optimal value in the program is the smallest weight of a covering.

\begin{figure}
\subfloat[Integer program]{
\label{f:prog:integer}
\begin{minipage}{0.3\textwidth}
$\sum\limits_{S \in \Family} w(S)\,x_S \to \min$\\
$x_S \in \Bin \text{\ for all $S \in \Family$}$\\
$\sum\limits_{\substack{S \in \Family \colon\\u \in S}} x_S \ge 1 \text{\ for all $u \in U$}$
\end{minipage}
}
\subfloat[Linear relaxation]{
\label{f:prog:linear}
\begin{minipage}{0.3\textwidth}
$\sum\limits_{S \in \Family} w(S)\,x_S \to \min$\\
$0 \le x_S \le 1 \text{\ for all $S \in \Family$}$\\
$\sum\limits_{\substack{S \in \Family \colon\\u \in S}} x_S \ge 1 \text{\ for all $u \in U$}$
\end{minipage}
}
\subfloat[Dual of the linear relaxation]{
\label{f:prog:dual}
\begin{minipage}{0.35\textwidth}
$\sum\limits_{u \in U} y_u \to \max$\\
$y_u \ge 0 \text{\ for all $u \in U$}$\\
$\sum\limits_{u \in S} y_u \le w(S) \text{\ for all $S \in \Family$}
\vphantom{\sum\limits_{\substack{S \in \Family \colon\\u \in S}}}$
\end{minipage}
}
\caption{Integer and linear programs for the set cover problem}
\label{f:prog}
\end{figure}

The \df{linear programming relaxation} of this integer program
is obtained by interpreting variables $x_S$ over reals:
see Figure~\ref{f:prog:linear}.
Now $0 \le x_S \le 1$ for each $S \in \Family$.
Feasible solutions to this program are called \df{fractional coverings}.
Suppose the optimal cost in the original set cover problem is \OptCover.
Then the integer program in Figure~\ref{f:prog:integer} has optimal value \OptCover,
and its relaxation in Figure~\ref{f:prog:linear} optimal value $\OptFrac \le \OptCover$.

Finally, define the \df{dual} of this linear program: this is also
a linear program, and it has a (real) variable $y_u$
for each element $u \in U$;
see Figure~\ref{f:prog:dual}.
This is a maximization problem, and
its optimal value coincides with \OptFrac
by the strong duality theorem.

The following lemma summarizes the properties of these programs
needed for the sequel.

\begin{lemma}
\label{l:dual}
If $(y_u)_{u \in U}$ is a feasible solution to the dual, then
$\sum_{u \in U} y_u \le \OptFrac \le \OptCover$.
There exists a feasible solution to the dual, $(y^*_u)_{u \in U}$,
such that $\sum_{u \in U} y^*_u = \OptFrac$.
\end{lemma}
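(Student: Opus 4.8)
The statement is essentially a repackaging of weak and strong linear programming duality, so the plan is to verify that the programs in Figure~\ref{f:prog} are set up so that the textbook duality theorems apply, and to spell out the one double-counting inequality that is used repeatedly in the sequel. First, the inequality $\OptFrac \le \OptCover$ is immediate: every covering \SubFamily\ yields, via $x_S = [\,S \in \SubFamily\,]$, a feasible solution of the integer program in Figure~\ref{f:prog:integer}, and every such $0/1$ vector is in particular feasible for the relaxation in Figure~\ref{f:prog:linear}; since the relaxation minimizes the same objective over a larger feasible region, its optimum $\OptFrac$ is at most the integer optimum $\OptCover$.

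Next I would prove the bound $\sum_{u \in U} y_u \le \OptFrac$ for every dual-feasible $(y_u)_{u \in U}$ by the standard argument: for any primal-feasible $(x_S)_{S \in \Family}$,
\[
\sum_{u \in U} y_u
\;\le\; \sum_{u \in U} y_u \sum_{\substack{S \in \Family\colon\\ u \in S}} x_S
\;=\; \sum_{S \in \Family} x_S \sum_{u \in S} y_u
\;\le\; \sum_{S \in \Family} w(S)\, x_S,
\]
where the first step uses $y_u \ge 0$ together with the primal constraint $\sum_{S \ni u} x_S \ge 1$, the middle step is a reordering of the (finite) double sum, and the last step uses $x_S \ge 0$ together with the dual constraint $\sum_{u \in S} y_u \le w(S)$. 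Taking the infimum over primal-feasible $x$ gives $\sum_{u \in U} y_u \le \OptFrac$, and combined with the previous paragraph this is the first assertion of the lemma.

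Finally, for the existence of an optimal dual solution $(y^*_u)_{u \in U}$ with $\sum_{u \in U} y^*_u = \OptFrac$ I would invoke the strong duality theorem of linear programming. Its hypotheses are met: the primal relaxation is feasible because \Family\ is assumed to be a covering of $U$ (so $x_S \equiv 1$ is feasible), and its objective is bounded below by $0$ since $w \ge 0$ and $x \ge 0$; hence both the primal and its dual attain their optima, with equal value $\OptFrac$. The only point requiring a line of care is that the dual program written in Figure~\ref{f:prog:dual} is literally the LP dual of the relaxation: the box constraints $x_S \le 1$ are redundant (capping any feasible $x$ at $1$ coordinatewise preserves feasibility and does not increase the objective, as $w \ge 0$), so the relaxation has the same optimum as the program with only $x_S \ge 0$, whose dual is exactly the one displayed. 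I do not expect any genuine obstacle here; the proof is entirely a matter of matching the given programs to the classical duality statements.
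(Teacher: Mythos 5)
Your proof is correct and matches what the paper does: the paper simply cites a standard reference (Karloff's LP textbook) for this lemma, and your argument is the textbook weak/strong duality proof that such a reference would supply, including the correct observation that the box constraints $x_S \le 1$ are redundant so the displayed dual is indeed the dual of the relaxation.
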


The proof can be found in, e.g.,~\cite{Karloff}.
We use the first part of Lemma~\ref{l:dual} in Section~\ref{s:lt}
to obtain a lower bound on \OptCover
and the second part in Section~\ref{s:kronecker}
to associate ``weights'' with $1$-elements in the matrix.

\subsubsection*{Greedy coverings}
%
The greedy heuristic for the unweighted set cover problem
works as follows. It maintains the set of uncovered elements,
initially $U$, and iteratively adds to \SubFamily (which is initially empty)
a set $S \in \Family$ which covers the largest number
of yet-uncovered elements.
Any covering obtained by this (nondeterministic) procedure is called
a \df{greedy covering}.
(There is a natural extension to the weighted version as well.)

A standard analysis of the greedy heuristic is performed in the
framework of approximation algorithms: the size of a greedy covering
is at most $O(\log|U|)$ times larger than that of the optimal covering~\cite{ch79,lovasz}.
But for our purposes a different upper bound will be more convenient:
an ``absolute'' upper bound in terms of the ``density'' of the instance.
Such a bound is given by the following result,
which is substantially less well-known:

\begin{lemma}[greedy covering lemma]
\label{l:greedy}
Suppose every element $s \in U$ is contained in at least $\gamma |\Family|$
sets from \Family, where $0 < \gamma \le 1$.
Then the size of any greedy covering does not exceed
\begin{equation*}
\left\lceil
\frac{1}{\gamma}
\ln^+ (\gamma |U|)
\right\rceil
+
\frac{1}{\gamma},
\end{equation*}
where $\ln^+(x) = \max(0, \ln x)$ and\, $\ln x$ is the natural logarithm.
\end{lemma}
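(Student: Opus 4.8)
The plan is to analyze the greedy heuristic through the classic potential-function argument, tracking the number of uncovered elements after each step. Let $u_0 = |U|$ and let $u_t$ denote the number of uncovered elements after $t$ iterations of the greedy algorithm. The key observation is that, at any point where $u_t$ elements remain uncovered, each of these uncovered elements is still contained in at least $\gamma |\Family|$ sets of $\Family$; summing over the uncovered elements, the total ``incidence count'' between uncovered elements and sets in $\Family$ is at least $\gamma |\Family| \cdot u_t$. By averaging, some set $S \in \Family$ covers at least $\gamma u_t$ of the currently uncovered elements. Since greedy picks a set covering the \emph{largest} number of uncovered elements, the chosen set covers at least $\gamma u_t$ of them, so
\begin{equation*}
u_{t+1} \le u_t - \gamma u_t = (1 - \gamma) u_t \le e^{-\gamma} u_t .
\end{equation*}

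Iterating this multiplicative decrease gives $u_t \le e^{-\gamma t} u_0 = e^{-\gamma t} |U|$. Hence after $t^* = \lceil \frac{1}{\gamma} \ln^+(\gamma |U|) \rceil$ steps we have $u_{t^*} \le e^{-\gamma t^*} |U| \le 1/\gamma$ (this bound being vacuous but still true when $\gamma|U| \le 1$, where $t^* = 0$). It remains to cover the residual $u_{t^*} \le 1/\gamma$ elements; since each greedy step covers at least one new element as long as some element is uncovered (again because every uncovered element lies in at least $\gamma|\Family| \ge 1$ sets, so $\Family$ is a feasible cover), at most $\lfloor 1/\gamma \rfloor \le 1/\gamma$ additional steps suffice. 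Adding the two phases yields the claimed bound $\lceil \frac{1}{\gamma} \ln^+(\gamma |U|) \rceil + \frac{1}{\gamma}$ on the total number of sets chosen, i.e.\ on the size of any greedy covering.

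The only subtleties, rather than genuine obstacles, are bookkeeping: first, handling the degenerate regime $\gamma |U| \le 1$ so that the $\ln^+$ truncation and the ceiling behave correctly (here the first phase is empty and the second phase alone suffices); second, being careful that the ``averaging'' argument uses the number of uncovered elements $u_t$ rather than $|U|$, and that $\gamma$ is defined with respect to the \emph{original} family $\Family$ (its size $|\Family|$ does not shrink), so the density guarantee persists throughout the execution. I do not expect any step to require real work beyond these checks; the heart of the proof is the one-line inequality $u_{t+1} \le (1-\gamma) u_t$, and everything else is rounding. (One could alternatively replace $e^{-\gamma t}$ by $(1-\gamma)^t$ throughout and get a very slightly sharper constant, but the stated form with the natural logarithm is the cleanest and matches Sapozhenko's formulation.)
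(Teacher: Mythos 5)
Your proof is correct and follows essentially the same route the paper indicates: the paper does not prove Lemma~\ref{l:greedy} in full but sketches exactly this ``standalone proof'' in which the number of uncovered elements shrinks by a factor $(1-\gamma)$ per greedy step, with the remaining $\le 1/\gamma$ elements handled one per step. Your bookkeeping of the two phases and of the degenerate case $\gamma|U|\le 1$ is accurate, so nothing is missing.
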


Several versions of the lemma
can be found in the literature.
It was proved for the first time in 1972
by Sapozhenko~\cite{sap72} and appears
in later textbooks~\cite[Lemma~9 in Section~3, pp.~136--137]{dmimvk},
\cite[pp.~134--135]{dmimvk-de}.
A slightly different form, attributed to Stein~\cite{stein} and Lov\'asz~\cite{lovaszDM},
was independently obtained later and
is sometimes known as the Lov\'asz--Stein theorem;
yet another proof is due to Karpinski and Zelikovsky~\cite{kz96}.
Recent treatments with applications and more detailed discussion
can be found in Deng~et~al.~\cite{deng11} and in Jukna's textbook~\cite[pp.~34--37]{j-comb}.

Since the upper bound of Lemma~\ref{l:greedy} is hardly a standard
tool in theoretical computer science as of now, a remark on the proof is in order.
A standalone proof goes via the following fact:
on each step of
the greedy algorithm the number of yet-uncovered elements shrinks
by a constant factor, determined by the density parameter $\gamma$
and the size of the instance.
Alternatively, one can use the result due to Lov\'asz~\cite{lovaszDM}
that the size of any greedy covering is within a factor of $1 + \log |U|$
from the optimal \emph{fractional} covering.
Since assigning the value
$(\min_{s \in U} |\{ S \in \Family \colon s \in S\}|)^{-1} = 1 / \gamma |U|$
to all $x_S$, $S \in \Family$, in the linear program in Figure~\ref{f:prog:linear}
leads to a feasible solution, an upper bound of $(1 / \gamma) \cdot (1 + \log |U|)$ follows.

We use Lemma~\ref{l:greedy} in Section~\ref{s:ks} to obtain an upper bound
on the $\ortwo$-complexity of Kneser-Sierpi\'nski matrices.
We remark that instead of greedy coverings one can use random coverings
to essentially the same effect (cf. Deng~et~al.~\cite{deng11}).

\section{Lower bound for the full triangular matrices}
\label{s:lt}

Define the $n \times n$ \df{full triangular matrix}
$T_n = (t_{i j})_{0 \le i, j < n}$ by $t_{i j} = 1$ if $i < j$ and $t_{i j} = 0$ otherwise.
This matrix $T_n$ is the adjacency matrix of the Hasse diagram of
the strict linear order $0 < 1 < \cdots < n - 1$;
it has $1$s everywhere above the main diagonal and $0$s on the diagonal
and below.
In this section, we study the smallest size of depth-$2$ rectifier networks
that express $T_n$.

Define
$s(n) = n(\lfloor \log_2 n \rfloor + 2) - 2^{\lfloor \log_2 n \rfloor + 1}$
for $n \geq 1$.
Note that $s(n)$ is the so-called binary entropy
function, sequence \seqnum{A003314} in Sloane's {\it Encyclopedia
of Integer Sequences} \cite{Sloane}.  Its properties were studied
previously by Morris \cite{Morris} because of its connection with
mergesort.

\begin{theorem}
\label{th:fft}
All fractional coverings of $T_n$ have cost of at least $s(n)$.
\end{theorem}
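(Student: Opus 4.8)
The plan is to exploit the first part of Lemma~\ref{l:dual}: it suffices to exhibit a single feasible solution $(y_u)_{u\in U}$ to the dual linear program (Figure~\ref{f:prog:dual}) whose objective value $\sum_u y_u$ equals $s(n)$, since then $s(n)\le\OptFrac\le\OptCover$, and in particular all fractional coverings of $T_n$ have cost at least $s(n)$. Here $U$ is the set of $1$-entries $(i,j)$ with $i<j$, and $\Family$ is the set of all $1$-rectangles $(R,C)$ of $T_n$, with $w((R,C))=|R|+|C|$. The dual constraint to be satisfied is $\sum_{(i,j)\in R\times C} y_{ij}\le |R|+|C|$ for every rectangle $(R,C)$, and this constraint is ``hardest'' precisely for the maximal rectangles of $T_n$, which are of the form $R=\{i : i<t\}$, $C=\{j : j\ge t\}$ for a threshold $t$ (since $t_{ij}=1$ iff $i<j$). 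So the whole problem reduces to: choose nonnegative weights $y_{ij}$ on the strictly-upper-triangular entries so that for every split point $t$, the total weight in the corresponding ``staircase'' block is at most (number of rows below $t$) $+$ (number of columns at or above $t$), while maximizing the total weight.

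First I would look for a weighting that is ``scale-invariant'' in the way $s(n)$ is built from binary expansions, i.e. a self-similar/recursive assignment matching the divide-and-conquer structure behind $s(n)=s(\lceil n/2\rceil)+s(\lfloor n/2\rfloor)+\text{(something linear)}$ — the same recursion that governs mergesort and the sequence \seqnum{A003314}. A natural guess is to let $y_{ij}$ depend only on the position of the most significant bit in which $i$ and $j$ differ (equivalently, on the node of the natural binary tree on $\{0,\dots,n-1\}$ at which $i$ and $j$ separate), giving each ``dyadic block'' a uniform weight; one then tunes the constant on each dyadic level so that the staircase constraints are tight. Because the entries of the $T_n$ matrix split along powers of two exactly as the summands of $s(n)$ do, I expect such a dyadic weighting to make $\sum_{i<j} y_{ij}$ telescope to precisely $s(n)$.

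The two things I would then verify are: (i) feasibility — that for every threshold $t$ (not just dyadic ones) the staircase sum is bounded by $|R|+|C|$; and (ii) that the objective evaluates exactly to $s(n)$. For (i) the worst thresholds should be the dyadic ones by a convexity/monotonicity argument on how the staircase sum and the cost $|R|+|C|$ each change as $t$ moves past a power of two, so checking feasibility at dyadic $t$ plus a short interpolation argument for general $t$ should suffice. For (ii) I would set up the level-by-level sum: at the dyadic level corresponding to blocks of size roughly $2^\ell$ there are about $n/2^\ell$ blocks each contributing a fixed amount, and summing over $\ell=0,\dots,\lfloor\log_2 n\rfloor$ reproduces the closed form; the non-power-of-two case is handled by the same split $n=\lceil n/2\rceil+\lfloor n/2\rfloor$ that defines $s$, giving the floor/ceiling corrections and the $-2^{\lfloor\log_2 n\rfloor+1}$ term.

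The main obstacle, I expect, is getting the dual solution exactly right rather than up to constants: it is easy to produce a weighting certifying a lower bound of $\Omega(n\log n)$, but pinning down the precise constant and the lower-order term $n\cdot 2 - 2^{\lfloor\log_2 n\rfloor+1}$ requires the dyadic weights to be chosen so that all the relevant staircase constraints are simultaneously tight. Handling $n$ that is not a power of two cleanly — making the recursion $s(n)=s(\lceil n/2\rceil)+s(\lfloor n/2\rfloor)+(n-1)$ (or whatever the exact linear term is) fall out of a correspondingly asymmetric split of the weight — is where the bookkeeping is most delicate, and I would organize the proof around an induction on $n$ using that recursion so that feasibility and optimality of the dual solution are maintained together.
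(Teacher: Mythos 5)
Your overall strategy matches the paper's: invoke the first part of Lemma~\ref{l:dual} and exhibit a feasible dual solution of value $s(n)$. However, there is a genuine gap in your feasibility plan. You assert that the dual constraint $\sum_{(i,j)\in R\times C} y_{ij}\le |R|+|C|$ is ``hardest'' for the maximal rectangles $R=\{i:i<t\}$, $C=\{j:j\ge t\}$, and you reduce the whole feasibility check to these threshold rectangles. This reduction is false: when you shrink a rectangle, \emph{both} sides of the constraint decrease, so the constraint for a sub-rectangle is not implied by the constraint for the maximal rectangle containing it. Concretely, a weighting that puts value $3$ on a single entry satisfies every threshold constraint ($3\le n$) but violates the $1\times 1$ constraint ($3\le 2$); already the singleton constraints force $y_{ij}\le 2$ everywhere, which no threshold-only analysis detects. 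The dual constraints must be verified for \emph{all} pairs of subsets $R,C$ with $\max R<\min C$ --- $R$ and $C$ need not even be intervals --- and this is exactly where all the difficulty of the theorem lives. The paper's proof devotes Lemma~\ref{l:feasible} to this: a minimal-counterexample argument showing that any violating pair $(R,C)$ must contain rows and columns whose nonzero pattern forces an index $d\in C$ with $d\le\max R$, a contradiction. Your plan, by contrast, replaces this step with an incorrect monotonicity claim, so the proof as outlined would not go through.

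A secondary difference: your proposed dual solution assigns $y_{ij}$ according to the most significant bit where $i$ and $j$ differ (a dyadic-block weighting), whereas the paper's solution is shift-invariant, depending only on $j-i$: weight $2$ when $j-i=1$, weight $1$ when $j-i=2^q$ for $q\ge 1$, and $0$ otherwise. Your construction is left at the level of a guess, and even if some tuning of the dyadic constants yields total weight $s(n)$, you would still face the full feasibility check over arbitrary (non-interval) rectangles described above; the shift-invariant structure of the paper's matrix $M$ (each row having at most one $2$ and the remaining nonzero entries equal to $1$ at power-of-two offsets) is what makes that check tractable.
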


\begin{corollary}
\label{cor:ft}
$\ortwo(T_n) = s(n)$.
\end{corollary}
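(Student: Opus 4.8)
The plan is to obtain Corollary~\ref{cor:ft} by combining Theorem~\ref{th:fft}, which handles the lower bound, with an explicit recursive construction, which handles the upper bound; recall that, by the equivalence recorded in Section~\ref{s:pre}, the quantity $\ortwo(T_n)$ is exactly the minimum cost of a covering of $T_n$ by rectangles.

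For the lower bound, observe that any covering of $T_n$ by rectangles is in particular a feasible \emph{fractional} covering: assign $x_S = 1$ to every chosen rectangle $S$. Theorem~\ref{th:fft} therefore applies and gives that the cost of that covering is at least $s(n)$; minimizing over all coverings, $\ortwo(T_n) \ge s(n)$.

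For the upper bound, I would build a covering of $T_n$ of cost exactly $s(n)$ by divide and conquer. Index rows and columns by $\{0, \ldots, n-1\}$, and recall that $(T_n)_{ij} = 1$ iff $i < j$, so a row set $R$ and a column set $C$ form an all-$1$ rectangle precisely when $\max R < \min C$. Partition $\{0, \ldots, n-1\}$ into $L = \{0, \ldots, \lfloor n/2 \rfloor - 1\}$ and $H = \{\lfloor n/2 \rfloor, \ldots, n-1\}$. Take the single rectangle with row set $L$ and column set $H$: it is valid because $\max L < \min H$, its cost is $|L| + |H| = n$, and it covers exactly the $1$-entries $(i,j)$ with $i \in L$ and $j \in H$. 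Every still-uncovered $1$-entry has both coordinates in $L$ or both in $H$, and on these index sets they form precisely the $1$-entries of the full triangular matrices of orders $\lfloor n/2 \rfloor$ and $\lceil n/2 \rceil$; so I recurse on each, with the empty covering (cost $0 = s(1)$) as the base case, since $T_1$ has no $1$-entries. The resulting covering has cost $C(n)$, where $C(1) = 0$ and $C(n) = n + C(\lfloor n/2 \rfloor) + C(\lceil n/2 \rceil)$ for $n \ge 2$. Realizing each rectangle $(R,C)$ by one intermediate vertex, fed by the inputs indexed by $C$ and feeding the outputs indexed by $R$, turns this covering into a depth-$2$ rectifier network of the same size, as in Section~\ref{s:pre}.

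It remains to check that $C(n) = s(n)$, i.e., that the closed form $s(n) = n(\lfloor \log_2 n \rfloor + 2) - 2^{\lfloor \log_2 n \rfloor + 1}$ satisfies $s(1) = 0$ and the recurrence $s(n) = n + s(\lfloor n/2 \rfloor) + s(\lceil n/2 \rceil)$ for $n \ge 2$. This is exactly the recurrence defining OEIS sequence A003314 (the mergesort-comparison count studied by Morris), and the verification is a routine induction; the only mild nuisance is bookkeeping the floors and ceilings across a power-of-two boundary, which I would handle by writing $n = 2^k + r$ with $0 \le r < 2^k$ and splitting into the cases $r \le 2^{k-1}$ and $r > 2^{k-1}$. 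This arithmetic identity is really the only step with any content beyond Theorem~\ref{th:fft}, and I do not expect a genuine obstacle here.
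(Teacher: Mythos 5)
Your proposal is correct and matches the paper's argument: the lower bound is obtained exactly as you say, by viewing an integral covering as a fractional one and invoking Theorem~\ref{th:fft}, and the upper bound in the paper's appendix is the same divide-and-conquer construction (split the index set at $\lfloor n/2 \rfloor$, one all-$1$ rectangle of cost $n$ for the cross block, recurse on the two triangular blocks), yielding the same recurrence $s(2n)=2s(n)+2n$, $s(2n+1)=s(n)+s(n+1)+2n+1$ that you verify against the closed form. The only cosmetic difference is that the paper phrases the upper bound for $\sumtwo(T_n)$ (the rectangles in this construction are disjoint, so it is in fact a partition), which a fortiori bounds $\ortwo(T_n)$.
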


Note that the equality of Corollary~\ref{cor:ft} gives the exact value of $\ortwo(T_n)$.
The upper bound is an easy divide-and-conquer argument
(reproduced in Appendix for completeness), and the main
challenge is to obtain the lower bound.

Consider the weighted set covering formulation for $T_n$, where
the optimal value is $\ortwo(T_n)$
as discussed in Section~\ref{s:techniques}.
By Lemma~\ref{l:dual},
it suffices to find a feasible solution to the dual linear program
with the value $s(n)$.
Our feasible solution is given by a certain infinite diagonal matrix
$M$, with rows and columns indexed by the natural numbers,
defined as follows:
$$ M_{i,j} = \begin{cases}
	2, & \text{if $j-i = 1$}; \\
	1, & \text{if $j-i = 2^q$ for some $q \geq 1$}; \\
	0, & \text{otherwise.}
	\end{cases}
$$
The first 17 rows and columns of $M$ are displayed in Figure~\ref{f:lt:matrix}.
Notice that each row is a shift, by $1$, of the preceding row.

\begin{figure}
\subfloat[Portion of the matrix $M$]{
\label{f:lt:matrix}
\newcommand*{\hcell}[1]{\rlap{\tiny #1}\phantom{0}}
\raisebox{\depth}{
$
\begin{smallmatrix}
i\backslash j
& \hcell{0} & \hcell{1} & \hcell{2} & \hcell{3}
& \hcell{4} & \hcell{5} & \hcell{6} & \hcell{7}
& \hcell{8} & \hcell{9} & \hcell{10} & \hcell{11}
& \hcell{12} & \hcell{13} & \hcell{14} & \hcell{15}
& \hcell{16}  \\
\hline
\vphantom{1^1}
\text{\tiny 0 }& 0 & 2 & 1 & 0 & 1 & 0 & 0 & 0 & 1 & 0 & 0 & 0 & 0 & 0 & 0 & 0 & 1 \\
\text{\tiny 1 }& 0 & 0 & 2 & 1 & 0 & 1 & 0 & 0 & 0 & 1 & 0 & 0 & 0 & 0 & 0 & 0 & 0 \\
\text{\tiny 2 }& 0 & 0 & 0 & 2 & 1 & 0 & 1 & 0 & 0 & 0 & 1 & 0 & 0 & 0 & 0 & 0 & 0 \\
\text{\tiny 3 }& 0 & 0 & 0 & 0 & 2 & 1 & 0 & 1 & 0 & 0 & 0 & 1 & 0 & 0 & 0 & 0 & 0 \\
\text{\tiny 4 }& 0 & 0 & 0 & 0 & 0 & 2 & 1 & 0 & 1 & 0 & 0 & 0 & 1 & 0 & 0 & 0 & 0 \\
\text{\tiny 5 }& 0 & 0 & 0 & 0 & 0 & 0 & 2 & 1 & 0 & 1 & 0 & 0 & 0 & 1 & 0 & 0 & 0 \\
\text{\tiny 6 }& 0 & 0 & 0 & 0 & 0 & 0 & 0 & 2 & 1 & 0 & 1 & 0 & 0 & 0 & 1 & 0 & 0 \\
\text{\tiny 7 }& 0 & 0 & 0 & 0 & 0 & 0 & 0 & 0 & 2 & 1 & 0 & 1 & 0 & 0 & 0 & 1 & 0 \\
\text{\tiny 8 }& 0 & 0 & 0 & 0 & 0 & 0 & 0 & 0 & 0 & 2 & 1 & 0 & 1 & 0 & 0 & 0 & 1 \\
\text{\tiny 9 }& 0 & 0 & 0 & 0 & 0 & 0 & 0 & 0 & 0 & 0 & 2 & 1 & 0 & 1 & 0 & 0 & 0 \\
\text{\tiny 10}& 0 & 0 & 0 & 0 & 0 & 0 & 0 & 0 & 0 & 0 & 0 & 2 & 1 & 0 & 1 & 0 & 0 \\
\text{\tiny 11}& 0 & 0 & 0 & 0 & 0 & 0 & 0 & 0 & 0 & 0 & 0 & 0 & 2 & 1 & 0 & 1 & 0 \\
\text{\tiny 12}& 0 & 0 & 0 & 0 & 0 & 0 & 0 & 0 & 0 & 0 & 0 & 0 & 0 & 2 & 1 & 0 & 1 \\
\text{\tiny 13}& 0 & 0 & 0 & 0 & 0 & 0 & 0 & 0 & 0 & 0 & 0 & 0 & 0 & 0 & 2 & 1 & 0 \\
\text{\tiny 14}& 0 & 0 & 0 & 0 & 0 & 0 & 0 & 0 & 0 & 0 & 0 & 0 & 0 & 0 & 0 & 2 & 1 \\
\text{\tiny 15}& 0 & 0 & 0 & 0 & 0 & 0 & 0 & 0 & 0 & 0 & 0 & 0 & 0 & 0 & 0 & 0 & 2 \\
\text{\tiny 16}& 0 & 0 & 0 & 0 & 0 & 0 & 0 & 0 & 0 & 0 & 0 & 0 & 0 & 0 & 0 & 0 & 0
\end{smallmatrix}
$\qquad
}
}
\subfloat[Definition of $a,b,c,d,e$]{
\label{f:lt:definition}
\includegraphics[width=0.5\textwidth]{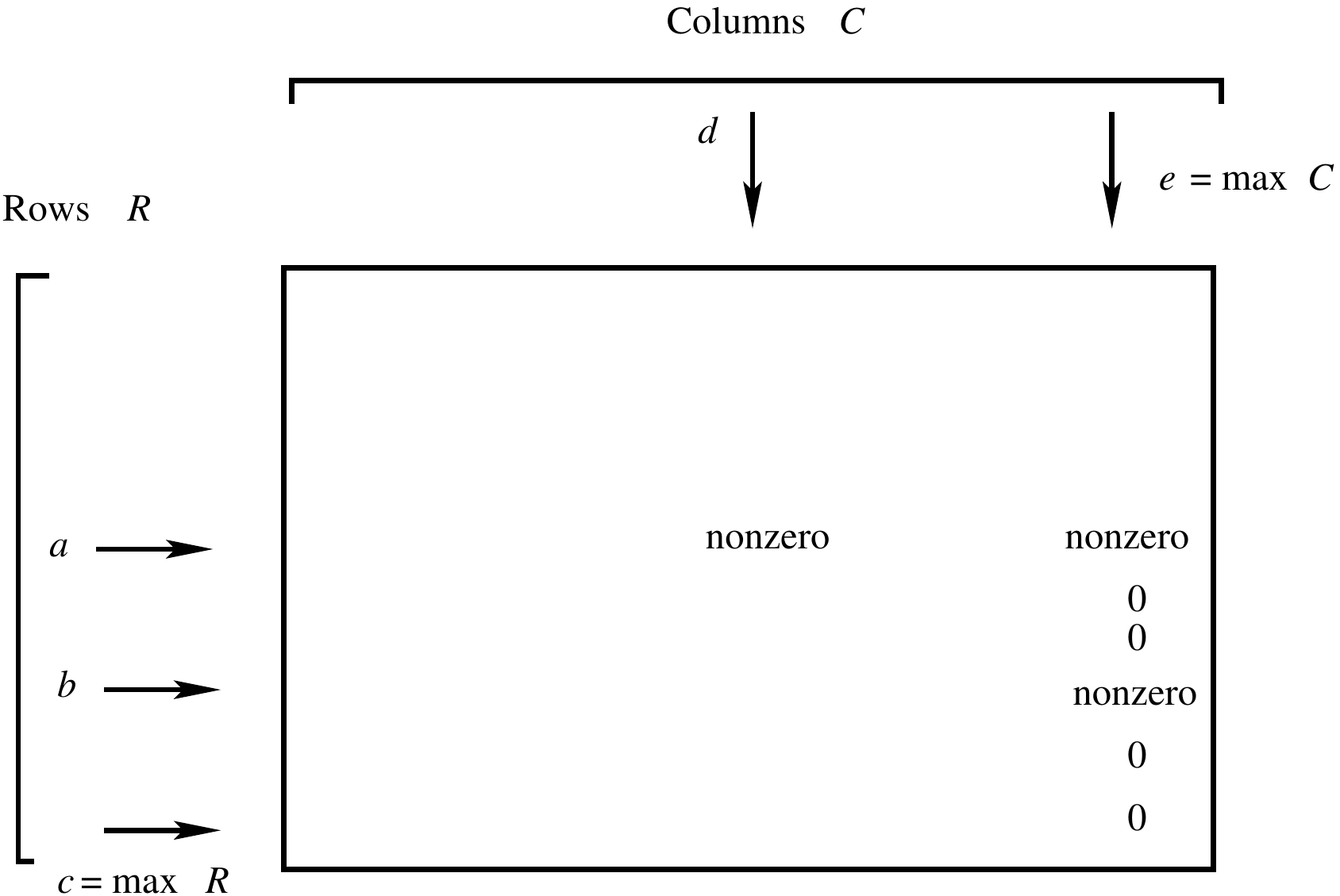}
}
\caption{Illustrations for the proof of Theorem~\ref{th:fft}}
\end{figure}

\begin{lemma}
\label{l:value}
The sum of the elements of $M^{(n)}$,
the $n \times n$ upper left submatrix of $M$,
is equal to $s(n)$.
\end{lemma}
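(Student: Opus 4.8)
The plan is to evaluate the sum of $M^{(n)}$ by grouping its entries according to the diagonal on which they lie. Every nonzero entry of $M$ has $j > i$, so write $d = j - i \ge 1$ for the index of the diagonal. Within the $n \times n$ submatrix $M^{(n)}$ (rows and columns $0, \ldots, n-1$) the diagonal $d$ contains exactly $n - d$ entries when $1 \le d \le n-1$, and none when $d \ge n$. By the definition of $M$, the only diagonals carrying nonzero entries are $d = 1$, on which every entry equals $2$, and $d = 2^q$ for $q \ge 1$, on which every entry equals $1$. Hence
\[
\sum_{0 \le i, j < n} M_{i,j} \;=\; 2(n-1) \;+\; \sum_{\substack{q \ge 1 \\ 2^q \le n-1}} (n - 2^q).
\]

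Next I would set $k = \lfloor \log_2 n \rfloor$, so that $2^k \le n < 2^{k+1}$, and argue that the inner sum may be taken over the full range $q = 1, \ldots, k$ without changing its value. Indeed, for $1 \le q < k$ we have $2^q \le 2^{k-1} < 2^k \le n$, so $2^q \le n-1$ and the term is genuinely present; the only borderline index is $q = k$, which appears in the original sum precisely when $2^k \le n-1$, i.e. when $n$ is not a power of two, and when $n = 2^k$ the term $n - 2^k$ that we would add is simply $0$. Thus in every case
\[
\sum_{0 \le i, j < n} M_{i,j} \;=\; 2(n-1) + \sum_{q=1}^{k} (n - 2^q) \;=\; 2n - 2 + kn - (2^{k+1} - 2) \;=\; n(k+2) - 2^{k+1},
\]
which is exactly $s(n)$; the degenerate case $n = 1$ (where $k = 0$ and both summands vanish) gives $0 = s(1)$ as well.

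The computation is routine once the diagonals are separated, so the only point requiring care — and the one I would flag as the main (minor) obstacle — is the off-by-one in the range of $q$ when $n$ is an exact power of two. The clean way around it is the observation above: extending the summation index up to $k = \lfloor \log_2 n \rfloor$ is harmless because the extra term $n - 2^k$ vanishes in exactly the case where it was not already counted. After that, the only remaining step is evaluating the geometric sum $\sum_{q=1}^{k} 2^q = 2^{k+1} - 2$, from which the target expression $s(n) = n(\lfloor \log_2 n \rfloor + 2) - 2^{\lfloor \log_2 n \rfloor + 1}$ drops out. (One could instead induct on $n$, checking that the contribution of the newly added row and column is $2 + \lfloor \log_2(n-1) \rfloor$ for $n \ge 3$ and that this matches $s(n) - s(n-1)$, but the direct diagonal count seems shorter and more transparent.)
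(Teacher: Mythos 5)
Your proof is correct. It takes a slightly different route from the paper: you sum the entries of $M^{(n)}$ directly by diagonals $d = j - i$, reducing everything to the geometric sum $\sum_{q=1}^{k} 2^q = 2^{k+1}-2$ with $k = \lfloor \log_2 n \rfloor$, and you correctly handle the only delicate point, namely that the boundary term $q = k$ may be absent from the literal range $2^q \le n-1$ but contributes $0$ exactly when $n = 2^k$. The paper instead proceeds by induction on $n$, observing that passing from $M^{(n)}$ to $M^{(n+1)}$ appends a zero row and a column whose entries sum to $\lfloor \log_2 n \rfloor + 2$, matching the recurrence $s(n+1) = s(n) + \lfloor \log_2 n \rfloor + 2$ --- which is precisely the inductive alternative you mention in your closing parenthesis. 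Your version buys a self-contained closed-form derivation of $s(n)$ that does not presuppose the recurrence for $s$; the paper's version is shorter because it leans on the already-stated column structure of $M$ and the standard recurrence for the binary entropy sequence. Either is perfectly adequate here.
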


\begin{lemma}
\label{l:feasible}
$y_{i,j} = M_{i,j}$ for $0 \leq i < j < n$
is a feasible solution to the dual program.
\end{lemma}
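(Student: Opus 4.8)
The plan is to check directly that $y=(y_{i,j})$ with $y_{i,j}=M_{i,j}$ for $0\le i<j<n$ satisfies both families of constraints of the dual linear program in Figure~\ref{f:prog:dual}. Nonnegativity, $M_{i,j}\ge 0$, holds by the definition of $M$. The other constraints are indexed by the members of $\Family$, i.e.\ by the $1$-rectangles of $T_n$: a pair $(R,C)$ with $R,C\neq\emptyset$ is a $1$-rectangle of $T_n$ precisely when $\max R<\min C$, and the corresponding constraint is $\sum_{i\in R}\sum_{j\in C}M_{i,j}\le|R|+|C|$. So everything reduces to proving
\[
  \sum_{i\in R}\sum_{j\in C}M_{i,j}\ \le\ |R|+|C|\qquad\text{whenever }\max R<\min C .
\]

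I would analyse the left-hand side combinatorially. Let $H=H(R,C)$ be the bipartite graph on the disjoint vertex sets $R$ and $C$ in which $i\in R$ and $j\in C$ are adjacent iff $j-i$ is a power of $2$ (including $2^0=1$). Since, for $i<j$, the value $M_{i,j}$ equals $1$ when $j-i$ is a power of $2$, plus an extra $1$ when $j-i=1$, the left-hand side equals $|E(H)|+N_1$, where $N_1$ is the number of pairs $(i,j)\in R\times C$ with $j=i+1$. Bounding $N_1$ is immediate: from $i\in R$, $j\in C$, $j=i+1$ one gets $i\le\max R<\min C\le j=i+1$, forcing $i=\max R$ and $j=\min C$, so $N_1\le 1$. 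The crux is then the claim that $H(R,C)$ is a forest whenever $\max R<\min C$. Granting this, $H$ (which has $|R|+|C|$ vertices and at least one connected component) has at most $|R|+|C|-1$ edges, so the left-hand side is at most $(|R|+|C|-1)+N_1\le|R|+|C|$, proving the lemma; combined with Lemmas~\ref{l:value} and~\ref{l:dual} this also yields Theorem~\ref{th:fft}.

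To prove that $H$ is acyclic I would argue by contradiction, taking a shortest cycle $Z$. Two structural observations set things up. First, for each $e$ the pairs at distance exactly $2^e$ form a partial matching (a vertex $i\in R$ has at most one $C$-neighbour at distance $2^e$, namely $i+2^e$, and dually), so consecutive edges of $Z$ have distinct lengths; in particular there is no $4$-cycle, since $2^a-2^b=2^c-2^d$ with $a>b$, $c>d$ forces $a=c$, $b=d$ by uniqueness of binary representations, so $Z$ has length at least $6$. Second, summing signed displacements around $Z$ shows that the total length of the ``up'' edges (from $R$ to $C$) equals that of the ``down'' edges, so some length must recur. One then inspects a longest edge $\{c,d\}$ of $Z$, say $c\in R$, $d\in C$, $d=c+2^E$, together with the cycle-neighbours of $c$ and $d$ and one further vertex: these are the quantities $a,b,c,d,e$ pictured in Figure~\ref{f:lt:definition}. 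The decisive use of the hypothesis is that, since $\max R<\min C$, every edge of $H$ has length at least $\min C-\max R$, hence all edge lengths along $Z$ lie in the window $[\min C-\max R,\,2^E]$; this ``separation'' pins the neighbours of $c$ and $d$ strictly between $c$ and $d$, and then following $Z$ around produces either a chord of $Z$ (contradicting its minimality) or an impossible identity among powers of $2$. Carrying out this final case analysis cleanly is the step I expect to be the main obstacle; the remaining steps are routine bookkeeping.
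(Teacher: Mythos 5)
Your reduction is sound and is a nice reformulation: writing the left-hand side as $|E(H)|+N_1$, observing that $N_1\le 1$ (only the pair $(\max R,\min C)$ can be at distance $1$), and noting that acyclicity of $H$ would give $|E(H)|\le |R|+|C|-1$ is all correct, and the acyclicity claim itself is true --- it can be extracted from the paper's argument. But the proof has a genuine gap exactly where all the difficulty of the lemma sits: you never actually prove that $H(R,C)$ is acyclic. The shortest-cycle/longest-edge plan is left as a sketch (``pins the neighbours \dots strictly between $c$ and $d$'', ``produces either a chord of $Z$ \dots or an impossible identity among powers of $2$''), and you yourself flag the concluding case analysis as ``the main obstacle.'' Without that step the lemma is not established; note also that your preliminary observations (no repeated lengths on consecutive edges, no $4$-cycles, equality of up- and down-displacements) do not by themselves force a contradiction for cycles of length $\ge 6$ --- such configurations of powers of $2$ do exist when the separation hypothesis $\max R<\min C$ is dropped (e.g.\ $R=\{0,2,6\}$, $C=\{4,8,10\}$), so the contradiction must genuinely exploit the separation, and how it does so is precisely what is missing.

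For comparison, the paper closes this step without any cycle machinery. It takes a minimal violating pair $(R,C)$, deduces that every row and column of the submatrix has sum at least $2$, and then works with a fixed local configuration: the column $e=\max C$, its two neighbouring rows $b=e-2^j$ and $a=e-2^k$ with $k\ge j+1$ (using the claim that two nonzero entries in column $e$ among rows $\le b$ force $e\le 2b$), and a second column $d=a+2^\ell$ ($\ell<k$) adjacent to row $a$. The computation $d=e+2^\ell-2^k\le e-2^{k-1}\le e-2^j=b\le \max R$ then contradicts $d\in C$ and $\max R<\min C$. If you want to keep your graph-theoretic framing, the same local argument (the top column of any min-degree-$2$ subgraph has two $R$-neighbours; the farther one has a second $C$-neighbour lying at or below $\max R$) is what you should substitute for the unfinished case analysis; anchoring at $\max C$ rather than at a longest edge of a shortest cycle is what makes the argument terminate in one step.
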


\begin{proof}[Proof of Lemma~\ref{l:value}]
$M^{(n+1)}$ is obtained from $M^{(n)}$ by concatenating a row of
$0$'s on the bottom, and a column that contains a single $2$
and $1$'s corresponding to the powers of $2$ that are $\leq n$.
In other words, $s(n+1) = s(n) + \lfloor \log_2 n \rfloor  + 2$.
The result now follows by an easy induction.
\end{proof}

\begin{proof}[Proof of Lemma~\ref{l:feasible}]
To prove feasibility, we need to see that for each pair of nonempty sets
$R, C \subseteq \{ 0, 1, \ldots, n - 1 \}$ 
with $\max R < \min C$---only such pairs $(R, C)$ are rectangles of $T_n$---we have
\begin{equation}
\sum_{{i \in R} \atop{j \in C}} M_{i,j} \leq |R|+|C|.
\label{ineq}
\end{equation}
Here $R$ corresponds to
a choice of rows of $M$ and $C$ to a choice of columns.

Suppose there exists a counterexample to \eqref{ineq}.
Among all counterexamples to \eqref{ineq}, consider one with the
smallest possible value of $|R|+|C|$.  If $|R|=1$ then
since at most one entry in each row is $2$ and all others are
either $0$ or $1$, we clearly have 
$\sum_{{i \in R} \atop{j \in C}} M_{i,j} \leq |R|+|C| = |C|+1$.
Hence $|R| \geq 2$.  The same argument applies if $|C|=1$.  Thus
the minimal counterexample to $\eqref{ineq}$ has at least two rows
and columns.

We now observe that the row sum of each row in our counterexample
is at least $2$.  For if it is $0$ or $1$ we could omit that row,
and~\eqref{ineq} would still be violated.   The same argument
applies to the column sums.
We now prove
\begin{claim*}
Suppose there are at least two nonzero elements in
the submatrix of $M$ formed by
rows $0, 1, \ldots, b$ and column $e$ of $M$.  Then $e \leq 2b$.
\end{claim*}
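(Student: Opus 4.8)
The plan is to read off the structure of column $e$ of $M$ and argue directly. First I would note that, by the definition of $M$, the rows $i$ with $M_{i,e} \neq 0$ are precisely $i = e-1$ (where $M_{e-1,e} = 2$) together with $i = e - 2^q$ for every $q \geq 1$ with $e - 2^q \geq 0$ (where $M_{i,e} = 1$); equivalently, the nonzero rows of column $e$, listed in decreasing order, are $e-1,\, e-2,\, e-4,\, e-8,\, \ldots$. The hypothesis says that at least two of these indices lie in $\{0,1,\ldots,b\}$, so I would fix two such indices $i_1 > i_2$ with $0 \leq i_2 < i_1 \leq b$.

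Next I would split into two cases according to whether $i_1$ is the superdiagonal row $e-1$. If $i_1 = e-1$, then the existence of the strictly smaller nonnegative index $i_2$ forces $i_1 \geq 1$, hence $b \geq i_1 = e-1 \geq 1$, and therefore $e \leq b+1 \leq 2b$. If $i_1 \neq e-1$, then both $i_1$ and $i_2$ have the form $e - 2^q$ with $q \geq 1$; write $i_1 = e - 2^{q_1}$ and $i_2 = e - 2^{q_2}$, and observe that $i_1 > i_2$ forces $q_1 < q_2$. The one inequality that does the work is that distinct powers of two differ by a factor of at least two, so $2^{q_1} \leq 2^{q_2 - 1} = 2^{q_2}/2$. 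Since $i_2 = e - 2^{q_2} \geq 0$ gives $2^{q_2} \leq e$, we obtain $2^{q_1} \leq e/2$, hence $i_1 = e - 2^{q_1} \geq e/2$; combined with $i_1 \leq b$ this yields $e \leq 2b$.

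I do not expect a genuine obstacle here; the argument is short once the set of nonzero rows of a column is written out explicitly. The only points that need a little care are the bookkeeping in the case $i_1 = e-1$, where one must use the presence of a second nonzero row below it to conclude $b \geq 1$ so that $b+1 \leq 2b$, and the implicit fact that the hypothesis already forces $e \geq 2$ (columns $0$ and $1$ carry at most one nonzero entry each). The latter is consistent with the stated bound but is not needed as a separate step.
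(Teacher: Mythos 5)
Your proof is correct and uses essentially the same idea as the paper: identify the nonzero rows of column $e$ as $e - 2^q$ and exploit the fact that consecutive powers of two differ by a factor of two. The paper avoids your case split on $i_1 = e-1$ by choosing $i_1, i_2$ to be the two smallest nonzero rows, namely $e - 2^i$ and $e - 2^{i-1}$ for $i$ the largest integer with $e - 2^i \geq 0$; from $e - 2^{i-1} \leq b$ and $2^i \leq e$ the bound $e \leq 2b$ follows in one line, but the underlying mechanism is the same as yours.
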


\begin{proof}
The nonzero elements in column $e$ occur precisely in the rows
numbered $e-1, e-2, \ldots, e-2^i$ where $i$ is the largest
integer with $e - 2^i \geq 0$.  So if there are nonzero elements
in rows $0, 1, \ldots, b$, these would be given by $e - 2^i$
and $e - 2^{i-1}$.  So $e - 2^{i-1} \leq b$.  
It now follows that $e \leq b + 2^{i-1} = b + {1 \over 2} \cdot 2^i
\leq b + {1 \over 2} e$ (since $e \ge 2^i$), and so $e \leq 2b$.
This concludes the proof of the claim.
\end{proof}

Now let us assume that our minimal counterexample has
$c = \max R$.  Let $e = \max C$.  Since column $e$ has $2$
nonzero elements, by the Claim above we know $e \leq 2c$.
Now let $b$ be the largest element $\leq c$ in $R$ 
for which there is a nonzero element in column $e$; this must
exist since column $e$ has at least two nonzero elements.
Let $a$ be any row $< b$ in $R$ with a nonzero element in column $e$.
Again, this must exist since column $e$ has at least two
nonzero elements.   Finally, let $d$ be any column $< e$ in $C$ with
a nonzero element in row $a$.  This must exist because every row
in $R$ has at least two nonzero elements.
We claim $d \leq c$.

To see this, note that $b = e-2^j \leq c$ for some $j \geq 0$.  (In fact,
$j = \lceil \log_2 (e-c) \rceil$.)
Then we must have $a = e-2^k \geq 0$ where $k \geq j+1$.
Then $d - a = 2^\ell$ for some $\ell$.  
So $d - a = d-(e-2^k) = 2^\ell$ and hence
$d = e + 2^\ell- 2^k$.  Since $d < e$ we have $\ell < k$.  
So $d \leq e + 2^{k-1} - 2^k = e - 2^{k-1} \leq e - 2^j = b \leq c$.
This is illustrated in Figure~\ref{f:lt:definition}.

\begin{figure}
\end{figure}

Now $\max R < \min C$, but $d \leq c$ while $d\in C$ and $c \in R$,
a contradiction.
Hence there are no minimal counterexamples and no counterexamples
at all.  Thus \eqref{ineq} holds.
It follows that $M$ represents a feasible solution.
This concludes the proof of Lemma~\ref{l:feasible}.
\end{proof}

Let us complete the proof of Theorem~\ref{th:fft}.
Apply the first part of Lemma~\ref{l:dual} to the weighted set covering formulation of
the problem and take the solution $y_{i,j} = M_{i,j}$, $0 \leq i < j < n$,
as described above.
This solution has value $s(n)$ by Lemma~\ref{l:value} and is feasible by
Lemma~\ref{l:feasible}.
Hence, all fractional coverings have cost at least $s(n)$.

\section{Upper bound for Kneser-Sierpi\'nski matrices}
\label{s:ks}

Suppose $n = 2^k$.
A \df{Kneser-Sierpi\'nski matrix} (or a \df{disjointness matrix})
of size $2^k \times 2^k$ is
the matrix $D_n$ defined as follows.
Rows and columns of the matrix are indexed from $0$ to $2^k - 1$.
The matrix has a $1$ at all positions $(i, j)$
such that $i$ and $j$ have no common $1$ in their binary expansion;
all other elements of the matrix are $0$.

Note that if we identify each number from \kbits
with a subset of \setk in the natural way, then $D_n$
is naturally associated with a Boolean function that
maps a pair of subsets of \setk to $1$ if they are disjoint,
and to $0$ if they have an element in common.
An alternative way to define $D_n$ is by a recurrence
$D_{2 n} = \left(\begin{smallmatrix} D_n & D_n \\ D_n & 0 \end{smallmatrix}\right)$
for $n \ge 1$; $D_1 = (1)$; here subsets of \setk are
ordered lexicographically. Using the antilexicographic order
for rows and the lexicographic order for columns would lead
to a lower triangular matrix.

What is the size of smallest depth-$2$ rectifier networks
that express Kneser-Sierpi\'nski matrices?
Jukna and Sergeev~\cite[Lemma~4.2]{js13} prove that
\begin{equation}
\label{eq:ks-js}
n^{\frac{1}{2} \log 5} / \polylog(n)
\le
\ortwo(D_n)
\le
n^{\log(1 + \sqrt{2})} \cdot \polylog(n),
\end{equation}
and in this section, we prove the following result:

\begin{theorem}
\label{th:ks}
$\ortwo(D_n) \le n^{\log(9/4)} \cdot \polylog(n)$.
\end{theorem}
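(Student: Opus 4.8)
The plan is to realize $\ortwo(D_n)$, for $n = 2^k$, as the optimum of the weighted set covering formulation of Section~\ref{s:techniques}, and to break this instance into $\polylog(n)$ pieces, each of which yields to the greedy heuristic. Identify an index in $\{0,\dots,2^k-1\}$ with a subset of $[k] = \{1,\dots,k\}$ and write $\mathrm{supp}(i)$ for the set of positions at which $i$ has a $1$; a $1$-entry $(i,j)$ of $D_n$ is exactly a pair with $\mathrm{supp}(i)\cap\mathrm{supp}(j) = \emptyset$. First I would stratify by Hamming weight: $(|\mathrm{supp}(i)|,|\mathrm{supp}(j)|)$ ranges over the $(k+1)^2 = \polylog(n)$ pairs $(a,b)$ with $a+b\le k$, so $D_n$ breaks into blocks $D^{a,b}_k$, the disjointness bipartite graph between weight-$a$ and weight-$b$ vectors. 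Taking the union over all $(a,b)$ of coverings of the blocks $D^{a,b}_k$ yields a covering of $D_n$, since every $1$-entry lies in some block and an all-$1$ rectangle of a block is an all-$1$ rectangle of $D_n$; hence it suffices to cover each block at cost $n^{\log(9/4)}\polylog(n)$. The diagonal block $a=b=m$ is the Kneser graph matrix $D^m_k$, which is where the intermediate result on $\ortwo(D^m_k)$ quoted in the introduction comes from.

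The second step is an ``additional restriction'' that turns the covering problem for a fixed block $D^{a,b}_k$ into an \emph{unweighted} set cover instance. Fix a parameter $s$ with $a\le s\le k-b$ and, for each $S\in\binom{[k]}{s}$, let $R_S$ be the set of weight-$a$ vectors whose support lies inside $S$ and $C_S$ the set of weight-$b$ vectors whose support is disjoint from $S$. Each $(R_S,C_S)$ is an all-$1$ rectangle of $D_n$ of the same cost $\binom sa+\binom{k-s}b$, and a $1$-entry $(i,j)$ of the block is covered by it exactly when $\mathrm{supp}(i)\subseteq S\subseteq[k]\setminus\mathrm{supp}(j)$. Covering $D^{a,b}_k$ using only rectangles from this family is therefore an unweighted set cover whose universe $U$ has $|U|=\binom ka\binom{k-a}b\le n^2$ elements, whose family has $\binom ks$ sets, and in which every element lies in exactly $\binom{k-a-b}{s-a}$ of them; so the instance has the uniform density $\gamma=\binom{k-a-b}{s-a}\big/\binom ks$.

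Now I would run the greedy heuristic and invoke the greedy covering lemma (Lemma~\ref{l:greedy}). Since $\ln^+(\gamma|U|)\le\ln|U|=O(\log n)$, any greedy covering uses $O\bigl(\tfrac1\gamma\log n\bigr)=O\bigl(\tfrac{\binom ks}{\binom{k-a-b}{s-a}}\log n\bigr)$ rectangles, so block $(a,b)$ is covered at cost
\[
O\!\left(\log n\cdot\frac{\binom ks}{\binom{k-a-b}{s-a}}\Bigl(\tbinom sa+\tbinom{k-s}b\Bigr)\right).
\]
It remains to pick $s=s(a,b)$ minimizing this, sum over the $\polylog(n)$ blocks, and show the exponent never exceeds $\log(9/4)$. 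Writing $a=\alpha k$, $b=\beta k$, $s=\sigma k$, using $\binom{k}{\xi k}=2^{kH(\xi)}$ up to polynomial factors in $k$ (with $H$ the binary entropy function and all logarithms base $2$), and absorbing the $O(\log n)$ and the $O(k^2)$ blocks into $\polylog(n)$, the claim reduces to
\[
\max_{\alpha+\beta\le1}\ \min_{\alpha\le\sigma\le1-\beta}\ \Bigl(H(\sigma)-(1-\alpha-\beta)H\bigl(\tfrac{\sigma-\alpha}{1-\alpha-\beta}\bigr)+\max\bigl\{\sigma H(\tfrac\alpha\sigma),\,(1-\sigma)H(\tfrac\beta{1-\sigma})\bigr\}\Bigr)\ \le\ \log\tfrac94 .
\]

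The hard part will be this final optimization: it is a min–max over three real parameters with a non-smooth objective (the inner maximum of the two rectangle-cost exponents), and extracting the clean constant $\log(9/4)=2\log(3/2)$ requires locating the extremal configuration — expected to lie near a diagonal block, with $\alpha=\beta$ about $2/5$ — and checking the inequality there. The true value of this min–max is the genuine exponent of our covering (greedy being optimal for the restricted family, up to lower-order factors) and lies strictly between the lower bound $\tfrac12\log 5\approx 1.16$ of \eqref{eq:ks-js} and $\log(9/4)\approx 1.17$, so $\log(9/4)$ is merely a convenient closed form for the statement. Everything else — the stratification into blocks, the passage to unweighted instances, and the application of Lemma~\ref{l:greedy} — is routine once the family $\{(R_S,C_S)\}$ is in hand; the only point that still needs a moment's care is that the $s$ achieving the minimum satisfies the feasibility constraint $a\le s\le k-b$, and that the degenerate blocks ($a=0$, $b=0$, or $a+b=k$) are disposed of separately, which is easy since they are trivially cheap to cover.
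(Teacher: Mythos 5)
Your construction is the paper's construction: the same stratification of $D_n$ into the $(k+1)^2=\polylog(n)$ weight-blocks $D^{x,y}_{[k]}$, the same family of rectangles induced by bipartitions $(S,\compl S)$ of $[k]$, the same uniform density $\gamma=\binom{k-x-y}{s-x}/\binom ks$, and the same invocation of the greedy covering lemma (this is Lemma~\ref{l:drop-polylog}). The one genuine gap is that you stop exactly where the theorem's content begins: you reduce the claim to a three-parameter min--max inequality and then declare it ``the hard part,'' without proving it. That inequality \emph{is} the proof that the exponent is $\log(9/4)$; without it you have shown only that some covering of cost $2^{c k}\polylog(n)$ exists for some unidentified $c$. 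Your heuristic for where the extremum sits is also off: you guess a diagonal block with $\alpha=\beta\approx 2/5$, but that is the extremal configuration for the \emph{lower} bound $n^{\frac12\log 5}$ of~\eqref{eq:ks-js}; the configuration forcing the upper bound $\log(9/4)$ in this argument is the off-diagonal one $x=4k/9$, $y=k/3$ (so $z=(k-x-y)/2=k/9$).

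The paper closes this gap (Lemma~\ref{l:opt-ub}) not by solving your min--max exactly but by committing to the specific choice $s=x+z$ with $z=(k-x-y)/2$, i.e.\ $\sigma=\alpha+(1-\alpha-\beta)/2$ in your notation. This makes the covering count $\binom{k}{x+z}/\binom{2z}{z}$ and the per-rectangle cost at most $2\binom{x+z}{z}$ (using $y\le x$), so the block cost is $f(x,y)=\binom{k}{x,z,k-x-z}/\binom{2z}{z}$ up to $\poly(k)$. Then, with $\alpha=z/k$, one bounds $\binom{k-z}{x}\le\binom{k-z}{(k-z)/2}$ and gets $f(x,y)\le 2^{(H(\alpha)+1-3\alpha)k}\poly(k)$, and a one-variable calculus check gives $H(\alpha)+1-3\alpha\le H(1/9)+2/3=\log(9/4)$. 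You would need to supply this (or an equivalent) computation; note that committing to a possibly suboptimal $\sigma$ is harmless for an upper bound and is what collapses your non-smooth three-parameter problem to a tractable one-parameter one.
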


Note that $\frac{1}{2} \log 5 \approx 1.16096$, $\log(9/4) \approx 1.16993$,
and $\log(1 + \sqrt{2}) \approx 1.27$.




Suppose $n = 2^k$ as above, and
let $D^{x, y}_{[k]}$ be the submatrix of $D_n$ whose rows and columns
correspond to $x$-sized and $y$-sized subsets of \setk, respectively.
This matrix $D^{x, y}_{[k]}$ has size $\binom{k}{x} \times \binom{k}{y}$.
If $x = y$, then $D^{x, x}_{[k]}$ is the adjacency matrix
of the Kneser graph~\cite{on-kneser}.

For $0 \le y \le x \le k$, write $z = (k - x - y) / 2$ and
$f(x, y) = \binom{k}{x, z, k - x - z} / \binom{2 z}{z}$.%
\footnote{We use the standard notation for multinomial coefficients:
$\binom{k}{a, b, c} = \frac{k!}{a!\, b!\, c!}$ provided that $a + b + c = k$.}
Jukna and Sergeev~\cite[Lemma~4.2]{js13} show that
all coverings of $D^{x, x}_{[k]}$ have cost at least $f(x, x) / \poly(k)$,
and this gives the lower bound in equation~\eqref{eq:ks-js}:
taking $x = 0.4 k$ brings $f(x, x)$ to its maximum of $n^{\frac{1}{2} \log 5}$,
if we disregard factors polylogarithmic in~$n = 2^k$.
Our Theorem~\ref{th:ks}
follows from Lemmas~\ref{l:drop-polylog} and~\ref{l:opt-ub}
below.


\begin{lemma}
\label{l:drop-polylog}
There exists a covering of $D^{x, y}_{[k]}$
with cost at most $f(x, y) \cdot \poly(k)$.
\end{lemma}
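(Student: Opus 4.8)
\textbf{Proof proposal for Lemma~\ref{l:drop-polylog}.}
The plan is to cover $D^{x,y}_{[k]}$ with an explicit family of ``canonical'' rectangles, all of the same cost, and to do so greedily via the greedy covering lemma (Lemma~\ref{l:greedy}); the resulting bound will match $f(x,y)$ up to a $\poly(k)$ factor. First I may assume $x+y\le k$ (otherwise every $x$-element set meets every $y$-element set, $D^{x,y}_{[k]}$ is the zero matrix, and the empty covering works) and that $k-x-y$ is even, so that $z=(k-x-y)/2$ is a nonnegative integer. For each $S\sset[k]$ with $|S|=x+z$, define the rectangle $\rho_S=(R_S,C_S)$ whose rows are all $x$-subsets of $S$ and whose columns are all $y$-subsets of $[k]\setminus S$; since $[k]\setminus S$ is disjoint from $S$, every such row is disjoint from every such column, so $\rho_S$ is an all-$1$ submatrix, of cost $\binom{x+z}{x}+\binom{k-x-z}{y}$, independent of $S$. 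These rectangles cover every $1$-entry $(A,B)$: as $A\cap B=\emptyset$ and $|A\cup B|=x+y=k-2z$, adding to $A$ any $z$ of the $2z$ elements of $[k]\setminus(A\cup B)$ yields $S\supseteq A$ with $|S|=x+z$ and $S\cap B=\emptyset$, so $(A,B)$ is covered by $\rho_S$.

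Next I apply Lemma~\ref{l:greedy} to the unweighted instance whose universe $U$ is the set of $1$-entries of $D^{x,y}_{[k]}$ and whose family $\Family$ consists of the rectangles $\rho_S$, $|S|=x+z$, so $|\Family|=\binom{k}{x+z}$. By the count above, each $(A,B)\in U$ lies in exactly $\binom{2z}{z}$ of these rectangles, so the density parameter is $\gamma=\binom{2z}{z}/\binom{k}{x+z}$; the hypothesis $y\le x$ gives $x+z=(k-x+y)/2\le k/2$, whence $\binom{2z}{z}\le\binom{k}{z}\le\binom{k}{x+z}$ and $\gamma\le 1$. Since $|U|\le 4^k$, we have $\ln^+(\gamma|U|)=O(k)$, so Lemma~\ref{l:greedy} produces a covering of size $O(k/\gamma)$. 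Multiplying by the (uniform) per-rectangle cost, the total cost is at most
\[
O\!\left(\frac{k}{\gamma}\right)\left(\binom{x+z}{x}+\binom{k-x-z}{y}\right)
= O(k)\cdot\frac{\binom{k}{x+z}}{\binom{2z}{z}}\left(\binom{x+z}{x}+\binom{k-x-z}{y}\right).
\]

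It remains to do the binomial bookkeeping. Using $\binom{k}{x+z}\binom{x+z}{x}=\binom{k}{x,z,k-x-z}$ and $k-x-y=2z$, the first term equals $\binom{k}{x,z,k-x-z}/\binom{2z}{z}=f(x,y)$ exactly. For the second, $k-x-z=z+y$, so $\binom{k-x-z}{y}=\binom{z+y}{y}\le\binom{z+x}{x}$ (since $\binom{z+y}{y}$ is nondecreasing in $y$ and $y\le x$), hence $\frac{\binom{k}{x+z}}{\binom{2z}{z}}\binom{k-x-z}{y}\le\frac{\binom{k}{x+z}}{\binom{2z}{z}}\binom{z+x}{x}=f(x,y)$ as well. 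Therefore the total cost is at most $O(k)\cdot 2f(x,y)=f(x,y)\cdot\poly(k)$, proving the lemma.

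The main obstacle is choosing the canonical rectangle family correctly: one wants all rectangles of equal cost (so that the unweighted greedy lemma, not the harder weighted analysis, suffices), and the value $z=(k-x-y)/2$ is precisely the choice that makes $\frac{1}{\gamma}\binom{x+z}{x}$ collapse to $f(x,y)$ while keeping the ``other half'' $k-x-z=z+y$ small enough — here the restriction $y\le x$ is used again — that the column contribution is also $\le f(x,y)$. Once this choice is in place, feasibility, the density estimate, and the arithmetic are routine. (I would assume $k-x-y$ is even throughout, which is implicit in $f(x,y)$ being well defined; the application in the proof of Theorem~\ref{th:ks} only uses such $(x,y)$.)
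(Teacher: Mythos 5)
Your proof is correct and is essentially the paper's own argument: the same family of rectangles indexed by $(x+z)$-subsets $S$ (rows the $x$-subsets of $S$, columns the $y$-subsets of its complement), the same density $\gamma=\binom{2z}{z}/\binom{k}{x+z}$ fed into Lemma~\ref{l:greedy}, and the same use of $y\le x$ to bound the per-rectangle cost by $2\binom{x+z}{x}$ before collapsing to $f(x,y)$. One tiny slip in your side-check that $\gamma\le 1$: $x+z=(k+x-y)/2\ge k/2$ rather than $\le k/2$, but the conclusion survives since $\binom{k}{x+z}=\binom{k}{y+z}$ and $z\le y+z\le k/2$ give $\binom{2z}{z}\le\binom{k}{z}\le\binom{k}{y+z}$.
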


\begin{proof}
Consider \Family, the family of all \df{ordered bipartitions} of $\{1, \ldots, k\}$
into sets of size $x + z$ and $y + z$, where $z = (k - x - y) / 2$.
Technically, an ordered bipartition is simply a subset of \setk,
but it is more instructive to view it as an ordered pair:
this subset and its complement.
Every such bipartition, $(S, \compl S)$, corresponds to a (maximal) rectangle in $D^{x, y}_{[k]}$;
elements of $D^{x, y}_{[k]}$ covered by the rectangle are pairs $(X, Y)$ of disjoint
sets that \df{respect} the bipartition: $X \sset S$ and $Y \sset \compl S$.
\par
Use the greedy covering lemma (Lemma~\ref{l:greedy})
for the unweighted set covering formulation
with \Family.
There are $\binom{k}{x + z}$ bipartitions in this family,
and every pair of disjoint sets $(X, Y)$ of size $x$ and $y$
respects $\binom{2 z}{z}$ of them,
so $\gamma = \binom{2 z}{z} / \binom{k}{x + z}$ and
any greedy covering will contain at most $N$~sets, where
\begin{equation*}
N =
\frac{\binom{k}{x + z}}{\binom{2 z}{z}} \cdot
(1 + \ln(4^k)) + 1 =
\frac{\binom{k}{x + z}}{\binom{2 z}{z}} \cdot \poly(k).
\end{equation*}
For every bipartition in the covering, the corresponding $1$-rectangle
in $D^{x, y}_{[k]}$ will include $\binom{x + z}{z}$ rows and $\binom{y + z}{z}$
columns; its cost will be at most $2\,\binom{x + z}{z}$ as $y \le x$.
So the total cost of the covering will not exceed
\begin{equation*}
\textstyle\binom{x + z}{z} \cdot 2 N
=
\frac{2\, \binom{k}{x + z} \binom{x + z}{z}\cdot \poly(k)}{\binom{2 z}{z}}
=
\frac{\binom{k}{x, z, k - x - z} \cdot \poly(k)}{\binom{2 z}{z}} = f(x, y) \cdot \poly(k).
\qedhere
\end{equation*}
\end{proof}

\begin{corollary}
\label{cor:kneser}
Suppose $0 \le m \le k/2$ and let $D^m_k = D^{m, m}_{[k]}$ be the adjacency matrix
of the (bipartite) Kneser graph: vertices in each part are size-$m$
subsets of $\{1, \ldots, k\}$, and two vertices from different parts are adjacent
if and only if the subsets are disjoint.
Then $d(m, k) / \poly(k) \le \ortwo(D^m_k) \le d(m, k) \cdot \poly(k)$ where
$d(m, k) = \binom{k}{m, k/2 - m, k/2} / \binom{k - 2 m}{k/2 - m}$.
\end{corollary}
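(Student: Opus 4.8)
The plan is to obtain Corollary~\ref{cor:kneser} as the special case $x = y = m$ of the machinery already developed, combined with the matching lower bound of Jukna and Sergeev.

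\textbf{Upper bound.} First I would instantiate Lemma~\ref{l:drop-polylog} with $x = y = m$. Here $z = (k - 2m)/2 = k/2 - m$, which is a nonnegative integer under the hypotheses (using $0 \le m \le k/2$ and the implicit assumption that $k$ is even, so that the bipartite Kneser graph indeed has $2\binom{k}{m}$ vertices). A one-line computation then identifies $f(m,m)$ with $d(m,k)$: since $k - m - z = k/2$ and $2z = k - 2m$, we get $f(m,m) = \binom{k}{m,\, k/2 - m,\, k/2} / \binom{k-2m}{k/2-m} = d(m,k)$. Because $\ortwo(D^m_k)$ is, by the discussion in Section~\ref{s:pre}, exactly the least cost of a covering of $D^{m,m}_{[k]} = D^m_k$ by $1$-rectangles, Lemma~\ref{l:drop-polylog} hands us a covering of cost at most $f(m,m)\cdot\poly(k)$, i.e. $\ortwo(D^m_k) \le d(m,k)\cdot\poly(k)$.

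\textbf{Lower bound.} For the matching bound I would invoke Jukna and Sergeev~\cite[Lemma~4.2]{js13}, which states that every covering of $D^{x,x}_{[k]}$ has cost at least $f(x,x)/\poly(k)$ (this is exactly the fact used to derive the lower bound in~\eqref{eq:ks-js}). Specializing to $x = m$ and using $f(m,m) = d(m,k)$ from the previous paragraph gives $\ortwo(D^m_k) \ge d(m,k)/\poly(k)$. If a self-contained argument were wanted, one could instead exhibit a dual-feasible weighting in the style of Section~\ref{s:techniques}: give each $1$-entry of $D^m_k$—that is, each disjoint pair $(X,Y)$ of $m$-subsets of $\{1,\dots,k\}$—weight proportional to $1/\binom{k-2m}{k/2-m}$, the reciprocal of the number of ordered bipartitions it respects, and check that every rectangle's total weight is bounded by its cost after an appropriate rescaling, so that the dual objective comes out to $d(m,k)/\poly(k)$. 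Combining the two directions yields the claimed sandwich.

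\textbf{Main obstacle.} There is no substantial difficulty here; the corollary is essentially a change of notation on top of Lemma~\ref{l:drop-polylog} and the cited lower bound. The points that need care are purely bookkeeping: the arithmetic identity $f(m,m) = d(m,k)$, the side conditions ($0 \le m \le k/2$ and $k$ even) that keep $z$ a nonnegative integer, and being consistent about the $\poly(k)$ factors in the two directions (harmless, since both bounds are only claimed up to $\poly(k)$, and in particular small additive/multiplicative constants such as the gap between $d=1$ and $\ortwo = 2$ at $m=0$ are absorbed). If one insists on the self-contained lower bound, the only genuine check is that the proposed weighting of $1$-entries is dual-feasible for \emph{every} rectangle of $D^m_k$, not merely for the maximal ones arising from bipartitions.
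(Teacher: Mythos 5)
Your proposal is correct and follows essentially the same route the paper intends: the corollary is stated immediately after Lemma~\ref{l:drop-polylog} precisely because the upper bound is the specialization $x=y=m$ of that lemma (with $z=k/2-m$, so $f(m,m)=d(m,k)$), and the lower bound is the cited Jukna--Sergeev result for $D^{x,x}_{[k]}$. Your extra remarks on the implicit evenness of $k$ and the possible self-contained dual-feasibility argument are sensible but not needed.
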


\begin{lemma}
\label{l:opt-ub}
If $0 \le y \le x \le k$, then $f(x, y) \le 2^{k \log(9/4)} \cdot \poly(k)$,
and there exists a pair $(x^*, y^*)$ such that $f(x^*, y^*) \ge 2^{k \log(9/4)} / \poly(k)$.
\end{lemma}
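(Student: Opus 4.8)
The plan is to turn the estimation of $f(x,y)$ into a continuous optimization problem via Stirling's approximation, and then locate the maximizing $(x,y)$ explicitly. First I would write $x = \alpha k$, $y = \beta k$, $z = (k-x-y)/2 = \gamma k$ with $\gamma = (1-\alpha-\beta)/2$, so that the constraints $0 \le y \le x \le k$ become $0 \le \beta \le \alpha$ and $\alpha + \beta \le 1$ (the last ensuring $z \ge 0$). Using the standard estimate $\binom{k}{a_1 k, \ldots, a_r k} = 2^{k H(a_1,\ldots,a_r)}/\poly(k)$, where $H$ is the (natural- or base-2) entropy function $H(a_1,\ldots,a_r) = -\sum a_i \log a_i$, I get
\begin{equation*}
f(x,y) = \frac{\binom{k}{x,\,z,\,k-x-z}}{\binom{2z}{z}} = 2^{k\,g(\alpha,\beta)} \cdot \poly(k),
\end{equation*}
where $g(\alpha,\beta) = H(\alpha,\gamma,1-\alpha-\gamma) - 2\gamma$, since $\binom{2z}{z} = 2^{2z}/\poly(k) = 2^{2\gamma k}/\poly(k)$ (here $H$ and $\log$ are to base $2$; the $2\gamma$ term is $\log_2$ of $4^{z}$). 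Note $1 - \alpha - \gamma = \gamma + \beta$, which is the fraction $(y+z)/k$.

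Next I would maximize $g$ over the closed region described above. The clean way is to substitute the definition of $\gamma$ and treat $g$ as a smooth function of $(\alpha,\beta)$ on the interior, then check the boundary. On the interior, setting $\partial g/\partial \alpha = \partial g/\partial \beta = 0$ and solving should yield the symmetric point — I expect $\alpha^* = \beta^*$, since $g$ is symmetric in the roles played by the two "halves'' of the partition once $\gamma$ is fixed, and indeed $f(x,y) \le f(x,x)$ is plausible but the maximum over the full region need not lie on $y=x$; careful checking here is needed. Carrying out the Lagrange/critical-point computation, I anticipate the optimum value to come out to exactly $g(\alpha^*,\beta^*) = \log_2(9/4)$, which is the target exponent. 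This pins down $(x^*,y^*) = (\alpha^* k,\, \beta^* k)$ (rounded to integers, which costs only $\poly(k)$ factors), giving the second claim $f(x^*,y^*) \ge 2^{k\log(9/4)}/\poly(k)$; the first claim $f(x,y) \le 2^{k\log(9/4)}\cdot\poly(k)$ is just the statement that $g(\alpha,\beta) \le \log_2(9/4)$ everywhere on the region, which the same optimization establishes.

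The main obstacle I expect is the optimization step itself: verifying that the interior critical point is a global maximum and that the boundary (where some of $\alpha$, $\beta$, $\gamma$, or $\gamma+\beta$ vanishes, or where $\beta = \alpha$) does not give a larger value. On the degenerate boundary pieces the multinomial degenerates to a binomial and $g$ reduces to a one-variable entropy expression, which is easy to bound below $\log_2(9/4)$ by inspection; the interior critical-point equations are the delicate part, where one must solve a small transcendental system and confirm the value is precisely $\log_2(9/4)$ rather than merely some nearby constant. Concavity of the entropy function $H$ (it is concave on the simplex) combined with the linearity of the $-2\gamma$ correction term in $(\alpha,\beta)$ makes $g$ concave, so any interior critical point is automatically the global maximum — this observation removes most of the difficulty and reduces the whole lemma to: (i) compute the critical point, (ii) evaluate $g$ there, (iii) note concavity handles uniqueness and the upper bound, (iv) dispatch the boundary with one-variable estimates.
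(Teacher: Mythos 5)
Your setup is sound, and your concavity observation is the right structural tool: $H$ is concave on the simplex, the map $(\alpha,\beta)\mapsto(\alpha,\gamma,\beta+\gamma)$ is affine, and $-2\gamma$ is affine, so $g$ is concave on the convex feasible region and any interior critical point is automatically the global maximum, boundary included (so the separate boundary check you worry about is not even needed). But the proposal stops exactly where the content of the lemma lies: the critical point is only ``anticipated,'' and your guess about where it sits is wrong. The function $g$ is \emph{not} symmetric under $\alpha\leftrightarrow\beta$: the three parts of the multinomial are $x$, $z$ and $y+z$, because the cost of a rectangle in Lemma~\ref{l:drop-polylog} is charged as $2\binom{x+z}{z}$ (twice the larger side, using $y\le x$), and the maximizer does not lie on the diagonal. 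Carrying out your own computation: $\partial g/\partial\beta=0$ gives $\log(\beta+\gamma)-\log\gamma=2$, i.e.\ $\beta=3\gamma$, and $\partial g/\partial\alpha=0$ gives $\alpha=2\sqrt{\gamma(\beta+\gamma)}=4\gamma$; with $\alpha+\beta+2\gamma=1$ this forces $\gamma=1/9$, $(\alpha^*,\beta^*)=(4/9,1/3)$, and there $g=H(4/9,1/9,4/9)-2/9=\log 9-2=\log(9/4)$. (Indeed $\max_x f(x,x)\approx 2^{\frac k2\log 5}$ is strictly smaller, so the diagonal cannot carry the maximum.) So the plan does close, but as written it asserts the answer rather than deriving it, and the asserted location of the optimum is incorrect.

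For comparison, the paper sidesteps the two-variable optimization entirely: it writes $\binom{k}{x,z,k-x-z}=\binom{k}{z}\binom{k-z}{x}$ and bounds $\binom{k-z}{x}\le\binom{k-z}{(k-z)/2}$ by the central binomial coefficient, which collapses the problem to maximizing the one-variable expression $H(\alpha)+1-3\alpha$ in $\alpha=z/k$; elementary calculus gives $\alpha=1/9$ and the value $\log(9/4)$, and equality in the binomial bound then pins down $x^*=(k-z)/2=4k/9$ and $y^*=k/3$. Your route is more work but yields the same answer together with a self-contained global-maximality argument via concavity; to make it a proof, replace the ``I anticipate'' step with the explicit critical-point solution above.
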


\begin{proof}
As above, let $2 z = k - (x + y)$.
Denote $\alpha = z / k$ and
recall that the values of the binomial coefficients
may be estimated with the help of the binary entropy function
(not to be confused with $s(n)$ from Section~\ref{s:lt}, also known under this name):
$\binom{k}{\lambda k} \sim 2^{H(\lambda) k + O(\log k)}$ as $k \to \infty$,
where $H(\lambda) = - \lambda \log \lambda - (1 - \lambda) \log(1 - \lambda)$.
This formula follows from Stirling's approximation
for the factorial~\cite[Chapter~9 and Solution to Exercise~9.42]{concrete-math}.
Now
\begin{equation*}
f(x, y) = \frac{ \binom{k}{z} \binom{k - z}{x} }{ \binom{2 z}{z} } \le
          \frac{ \binom{k}{z} \binom{k - z}{(k - z) / 2} }{ \binom{2 z}{z} } =
          \frac{ 2^{k H(\alpha)} 2^{(1 - \alpha) k H(1/2)} }{ 2^{2 \alpha k H(1/2)} }
          \cdot \poly(k)
        = 2^{(H(\alpha) + 1 - 3 \alpha) k} \cdot \poly(k)
\end{equation*}
as $H(1/2) = 1$.
Simple calculations show that for $0 < \alpha < 1/2$ the inequality
$H(\alpha) + 1 - 3 \alpha \le H(1/9) + 1 - 3 \cdot 1/9 = \log(9/4)$ holds.
This corresponds to $x = 4 / 9 \cdot k$ and $y = 3 / 9 \cdot k$.
\end{proof}

To complete the proof of Theorem~\ref{th:ks},
it remains to note that a union of coverings of matrices
$D^{x, y}_{[k]}$ for all pairs $x, y$ with $0 \le x, y \le k$
constitutes a covering of $D_n$. For $0 \le y \le x \le k$,
the coverings are constructed by Lemma~\ref{l:drop-polylog},
and for $x \le y$ the construction just swaps the roles of $x$ and $y$.
Since there are only $(k + 1)^2 = \polylog(n)$
pairs $x, y$ in total, the desired follows from Lemma~\ref{l:opt-ub}.

\begin{myremark}
Although Theorem~\ref{th:ks} leaves a gap
between the bounds on $\ortwo(D_n)$,
the greedy strategy is, in fact, optimal:
For each $D^{x, y}_{[k]}$, it suffices to
use bipartitions into sets of size $\ell$ and $k - \ell$,
for some $\ell = \ell(k; x, y)$.
(See Appendix for more details.)
Our choice of $\ell$ in Lemma~\ref{l:drop-polylog} is $\ell = x + (k - x - y) / 2$,
and the optimal choice, $\ell = \ell^*(k; x, y)$,
will deliver a tight upper bound on $\ortwo(D_n)$.
Numerical experiments seem to indicate that
the actual value of $\ortwo(D_n)$ is within a $\polylog(n)$ factor
from $n^{\frac{1}{2} \log 5}$, but no formal proof is known to us.
\end{myremark}


\section{Lower bound for Kronecker products}
\label{s:kronecker}

Given two matrices $K\in\{0,1\}^{m_1\times n_1}$  and
$M\in\{0,1\}^{m_2\times n_2}$, their \df{Kronecker}
(or \df{tensor}) \df{product} is the Boolean matrix
$K\otimes M$ of size $(m_1\cdot m_2) \times (n_1\cdot n_2)$
defined as follows.
Its rows are indexed by pairs $(i_1, i_2)$ and its columns
by pairs $(j_1, j_2)$ where $1 \le i_s \le m_s$ and $1 \le j_s \le n_s$
for $s = 1, 2$.
The entry of $K \otimes M$ at position $((i_1, i_2), (j_1, j_2))$
is defined as $K_{i_1,j_1}\cdot M_{i_2,j_2}$.

In this section we prove a lower bound on the $\orany(\cdot)$-measure of Kronecker products.
%
%
%
%
Recall that the Boolean rank $\orrank(K)$ is the optimal value of the unweighted
set covering formulation (as in Figure~\ref{f:prog:integer}) where
the set of $1$-entries in the matrix $K$ is covered by all-$1$ rectangles.
In the linear
relaxation of this problem (as in Figure~\ref{f:prog:linear}), the goal
is to assign weights $w(R)\in[0,1]$ to
each $1$-rectangle $R$ such that $\sum_{(i,j)\in R}w(R)\geq 1$ for each
$1$-entry $(i,j)$ of $K$, minimizing $\sum w(R)$.
Let \df{the fractional rank}
$\frrank(K)$ be the optimal value of this linear relaxation.
The integrality gap result for the set cover problem~\cite{lovaszDM}
and the duality theorem
imply that $\orrank(K) / (1 + \log m_1 n_1) \le \frrank(K) \le \orrank(K)$.
In the graph-theoretic language, the number $\frrank(K)$ is
the \emph{fractional biclique cover number}, denoted by $bc^*(G)$
where $K$ is the adjacency matrix of the (bipartite) graph $G$.
Fractional rank is known to be bounded from below by
the fooling set number, see Watts~\cite[Theorem~2.2]{Watts06}.

\begin{theorem}
\label{th:dirprod}
For any pair $K$, $M$ of Boolean matrices,
$\orany(K\otimes M)\geq \frrank(K)\cdot\orany(M)$.
\end{theorem}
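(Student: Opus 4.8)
The plan is to marry the ``sub-network'' trick from the folklore proof of $\orany(I_k \otimes M) \ge k \cdot \orany(M)$ with the existence of an optimal dual solution furnished by the second part of Lemma~\ref{l:dual}. Fix an optimal rectifier network $\Net = (V, E, \sources, \sinks)$ expressing $K \otimes M$, so that $|E| = \orany(K \otimes M)$, and let $(y^*_{(i_1,j_1)})$, indexed by the $1$-entries of $K$, be an optimal feasible solution to the dual (as in Figure~\ref{f:prog:dual}) of the \emph{unweighted} set covering formulation for $K$; thus $y^*_{(i_1,j_1)} \ge 0$, $\sum_{(i_1,j_1)} y^*_{(i_1,j_1)} = \frrank(K)$, and $\sum_{(i_1,j_1) \in R} y^*_{(i_1,j_1)} \le 1$ for every all-$1$ rectangle $R$ of $K$.

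For each $1$-entry $(i_1,j_1)$ of $K$ I would form the subnetwork $\Net_{(i_1,j_1)}$ of $\Net$ that retains exactly those edges lying on a directed path from some input $\sources((j_1, j_2))$ to some output $\sinks((i_1, i_2))$, equipped with inputs $\sources((j_1, \cdot))$ and outputs $\sinks((i_1, \cdot))$. The first step is to verify that $\Net_{(i_1,j_1)}$ expresses $M$: a path from $\sources((j_1, j_2))$ to $\sinks((i_1, i_2))$ survives in $\Net_{(i_1,j_1)}$ iff it exists in $\Net$, since all of its edges are retained; and in $\Net$ such a path exists iff $(K \otimes M)_{(i_1,i_2),(j_1,j_2)} = K_{i_1,j_1}\, M_{i_2,j_2} = M_{i_2,j_2}$, using $K_{i_1,j_1} = 1$. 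Hence $|E(\Net_{(i_1,j_1)})| \ge \orany(M)$.

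The crux is a rectangle claim that controls how much these subnetworks overlap: for every edge $e \in E$, the set $S_e = \{(i_1,j_1) : K_{i_1,j_1} = 1,\ e \in E(\Net_{(i_1,j_1)})\}$ is contained in an all-$1$ rectangle of $K$. To see this, write $e = (a,b)$; if $(i_1,j_1), (i_1',j_1') \in S_e$, then $e$ lies on a path from $\sources((j_1,j_2))$ to $\sinks((i_1,i_2))$ for some $j_2, i_2$, and on a path from $\sources((j_1',j_2'))$ to $\sinks((i_1',i_2'))$ for some $j_2', i_2'$; splicing the $\sources((j_1,j_2)) \leadsto a$ portion of the first with the $b \leadsto \sinks((i_1',i_2'))$ portion of the second yields a path in $\Net$, so $(K \otimes M)_{(i_1',i_2'),(j_1,j_2)} = 1$ and hence $K_{i_1',j_1} = 1$. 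Therefore, setting $R_e = \{i_1 : (i_1,\cdot) \in S_e\}$ and $C_e = \{j_1 : (\cdot,j_1) \in S_e\}$, the block $R_e \times C_e$ consists entirely of $1$-entries of $K$ and contains $S_e$, so dual feasibility gives $\sum_{(i_1,j_1) \in S_e} y^*_{(i_1,j_1)} \le 1$.

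Double counting then finishes the proof: writing $|E(\Net_{(i_1,j_1)})| = \sum_{e \in E}[\,e \in E(\Net_{(i_1,j_1)})\,]$, weighting by $y^*_{(i_1,j_1)}$ and swapping the order of summation,
\begin{equation*}
\orany(M) \cdot \frrank(K)
= \orany(M) \sum_{(i_1,j_1)} y^*_{(i_1,j_1)}
\le \sum_{(i_1,j_1)} y^*_{(i_1,j_1)}\, |E(\Net_{(i_1,j_1)})|
= \sum_{e \in E} \sum_{(i_1,j_1) \in S_e} y^*_{(i_1,j_1)}
\le |E| = \orany(K \otimes M).
\end{equation*}
I expect the rectangle claim for $S_e$ to be the only real obstacle; it is the place where the ``solve each copy independently'' intuition is formalised, and it is precisely what permits the dual weights to be charged against a single feasibility constraint per edge. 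Everything else --- the correctness of the subnetworks and the bookkeeping in the display --- is routine. The same argument, run on a \SUM-network and with $\sumany(\cdot)$ in place of $\orany(\cdot)$, should yield Corollary~\ref{cor:dirprod}, since the splicing step only uses path (not unique-path) reachability and each $\Net_{(i_1,j_1)}$ inherits unambiguity from $\Net$.
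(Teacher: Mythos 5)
Your proposal is correct and is essentially the paper's own proof: your sets $S_e$ coincide with the projections $\pi_1(R(e))$ of the reachability rectangles used in the paper, your splicing argument establishes the same fact that these projections are all-$1$ rectangles of $K$, and the final double counting against the optimal dual solution is identical. The remark about the \SUM-variant also matches the paper's treatment.
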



\begin{proof}
First consider the unweighted set covering formulation for $K$,
where the optimal value is $\orrank(K)$
as discussed in Section~\ref{s:techniques},
and take its linear relaxation, with the optimal value $\frrank(K)$.
%
By Lemma~\ref{l:dual}, there is an assignment of weights to $1$-elements of
this matrix, $w(i, j)\in[0,1]$ for all $(i, j)$ with $K_{i, j} = 1$,
such that the following two conditions are satisfied (see Figure~\ref{f:prog:dual}).
First, for each $1$-rectangle $R \times C$ of $K$,
the sum $\sum_{(i,j)\in R \times C}w(i, j)$ is at most~$1$.
Second, $\sum_{(i, j):K_{i,j}=1}w(i, j) = \frrank(K)$.
%
%
%

\def\Left{{\mathrm{To}}}
\def\Right{{\mathrm{From}}}

Now let $\Net=(V,E,\sources,\sinks)$ be a rectifier network of size $\orany(K \otimes M)$
that expresses $Q=K\otimes M$,
where $K$ and $M$ have size as above.
For an edge $e\in E$, let
$\Left(e)\subseteq \{1,\ldots,m_1\}\times\{1,\ldots,m_2\}$
be the set of row indices $(i_1,i_2)$ of $Q$ such that the
node $\sinks(i_1,i_2)$ is reachable from the target of $e$.
Similarly, let
$\Right(e)\subseteq\{1,\ldots,n_1\}\times\{1,\ldots,n_2\}$
be the set of column indices $(j_1,j_2)$ of $Q$ such that
the source of $e$ is reachable from $\sources((j_1,j_2))$.
Then $R(e)=(\Left(e),\Right(e))$ is a rectangle of $Q$.
Moreover, 
define $\pi_s((i_1,i_2),(j_1,j_2))=(i_s,j_s)$ for $s=1,2$
and $\pi_s(R)=\{\pi_s(r,c) \colon (r,c)\in R\}$.
Then $\pi_1(R(e))$ and $\pi_2(R(e))$ are
rectangles in $K$ and $M$ respectively.

We assign real weights based on $w$ to each edge $e$ of $\Net$ by the following rule:
\[w'(e)=\mathop\sum\limits_{(i,j)\in \pi_1(R(e))}w(i,j).\]
Since $\pi_1(R(e))$ is a rectangle of $K$,
one of the constraints on $w$ ensures that $w'(e)\leq 1$ for each edge $e$
of $\Net$. Consequently, $\sum_{e\in E}w'(e)\leq|E| = \orany(K \otimes M)$;
furthermore, the following chain of inequalities holds:
\begin{align}
\orany(K \otimes M) & \ge \quad \sum_{e\in E}w'(e)\quad=\quad
\sum_{e\in E}\ \sum\limits_{(i_1,j_1)\in\pi_1(R(e))}w(i_1,j_1)\notag\\
&=\nvts\sum_{(i_1,j_1):K_{i_1,j_1}=1}\nvts
w(i_1,j_1)\cdot |\{e\in E:(i_1,j_1)\in\pi_1(R(e))\}|\notag\\
\label{eq-align-lp}
&=\nvts\sum_{(i_1,j_1):K_{i_1,j_1}=1}\nvts
w(i_1,j_1)\cdot|\{e\in E:
i_1\in\pi_1(\Left(e)),j_1\in\pi_1(\Right(e))\}|.
\end{align}
Fix an arbitrary entry $(i_1,j_1)$ of $K$ with $K_{i_1,j_1}=1$.
Consider the subgraph $\Net_{j_1\leadsto i_1}$ of
$\Net$ induced by the nodes that are reachable from some
source of the form $\sources(j_1,j_2)$ and from which a node
of the form $\sinks(i_1,i_2)$ is reachable---in other words, take
all nodes and edges on all paths from $\sources(j_1,j_2)$ to
$\sinks(i_1,i_2)$ for some $i_2$, $j_2$.
Then, since $K_{i_1,j_1}=1$, the node $\sinks(i_1,i_2)$ is
reachable from $\sources(j_1,j_2)$ in $\Net_{j_1\leadsto i_1}$ if
and only if $M_{i_2,j_2}=1$. So the network $\Net_{j_1\leadsto i_1}$
expresses $M$ (with the mappings $\sources'(j_2)=\sources(j_1,j_2)$
and $\sinks'(i_2)=\sinks(i_1,i_2)$).
Hence, the number of edges in $\Net_{j_1\leadsto i_1}$ is
at least $\orany(M)$. But by our definitions, the relations
$i_1\in\pi_1(\Left(e))$ and
$j_1\in\pi_1(\Right(e))$ hold together exactly for the edges $e$
of $\Net$ present in $\Net_{j_1\leadsto i_1}$.
Thus $|\{e\in E:i_1\in \pi_1(\Left(e)),j_1\in\pi_1(\Right(e))\}|\geq\orany(M)$
and we conclude from equation~\eqref{eq-align-lp}
%
that
\begin{equation*}
\orany(K \otimes M)
\geq \sum_{(i_1,j_1):K_{i_1,j_1}=1}w(i_1,j_1)\cdot\orany(M)
=\frrank(K)\cdot\orany(M).\qedhere
\end{equation*}
\end{proof}

\begin{myremark}
Let $\sumany(K)$ be the smallest size of an \df{unambiguous}
rectifier network that expresses $K$.
A rectifier network is unambiguous if for all $i$, $j$
it has at most one path from $\sources(j)$ to $\sinks(i)$.
Such networks are also known under the names of \SUM-circuits~\cite{js13}
and cancellation-free circuits~\cite{bf15}.
The same construction as above also proves the inequality
$\sumany(K\otimes M) \geq \frrank(K)\cdot\sumany(M)$.
\end{myremark}

\begin{corollary}
\label{cor:dirprod}
For any pair of matrices $K \in \Bin^{m_1 \times n_1}$
and $M \in \Bin^{m_2 \times n_2}$, and $\mathsf{L}\in\{\orany,\sumany\}$ it holds that
$\mathsf{L}(K\otimes M) \ge \orrank(K) \cdot \mathsf{L}(M) / (1 + \log m_1 n_1)$.
\end{corollary}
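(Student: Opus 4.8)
The plan is to obtain this as an immediate consequence of Theorem~\ref{th:dirprod} together with the comparison between fractional and integral Boolean rank recalled at the beginning of Section~\ref{s:kronecker}. First I would invoke Theorem~\ref{th:dirprod} in the case $\mathsf{L} = \orany$, and its counterpart from the Remark following that theorem in the case $\mathsf{L} = \sumany$; in both cases this gives $\mathsf{L}(K \otimes M) \ge \frrank(K)\cdot\mathsf{L}(M)$. The two cases are handled uniformly: the construction in the proof of Theorem~\ref{th:dirprod} is identical, the only difference being that when $\Net$ is unambiguous the subnetwork $\Net_{j_1 \leadsto i_1}$ inherits unambiguity, so its edge count is at least $\sumany(M)$ rather than merely $\orany(M)$.

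Next I would plug in the lower bound on the fractional rank. The integrality-gap result for set cover due to Lov\'asz, combined with LP duality (exactly as stated just before Theorem~\ref{th:dirprod}), yields $\frrank(K) \ge \orrank(K) / (1 + \log m_1 n_1)$, the logarithmic term coming from the size of the universe, namely the number of $1$-entries of $K$, which is at most $m_1 n_1$. Combining the two inequalities gives
\[
\mathsf{L}(K \otimes M) \;\ge\; \frrank(K)\cdot\mathsf{L}(M) \;\ge\; \frac{\orrank(K)}{1 + \log m_1 n_1}\cdot\mathsf{L}(M),
\]
which is precisely the claimed bound, for both $\mathsf{L} = \orany$ and $\mathsf{L} = \sumany$.

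I do not expect any genuine obstacle here: the corollary is a mechanical composition of two results already in hand. The only point worth a line of care is the logarithmic factor: one should bound the number of covered elements (the $1$-entries of $K$) by $m_1 n_1$, so that the denominator $1 + \log m_1 n_1$ is valid uniformly, irrespective of how sparse or dense $K$ happens to be.
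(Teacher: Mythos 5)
Your proposal is correct and matches the paper's intended argument exactly: the corollary is obtained by combining Theorem~\ref{th:dirprod} (and the Remark for $\sumany$) with the inequality $\frrank(K) \ge \orrank(K)/(1+\log m_1 n_1)$ stated just before the theorem. No gaps.
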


%
%

\emph{Acknowledgements.}
We are grateful to Stasys Jukna, Alexander Kulikov,
Igor Sergeev, and anonymous reviewers for comments and discussions.

\newpage

\bibliographystyle{plain}
\bibliography{fgc}

\newpage

\appendix

\section{Depth-3 lower bound in Example~\ref{our-example}}

Consider the matrix
$M_n=\left(\begin{array}{ll}1&1\\0&1\end{array}\right)\otimes J_n$ for some
$n\geq 1$ where $J_n$ is the $n\times n$ all-one matrix.
Known bounds give $\OR(M_n)\geq 4n+1$ and this bound is indeed attainable.
For $\OR_3(M_n)$, i.e. realization by some rectifier network of exact depth $3$
we show $\OR_3(M_n)=4n+3$ using the following lemma:
\begin{lemma}
Suppose $M$ is a Boolean matrix and $\Net = (V, E, \sources, \sinks)$ is
a rectifier network realizing $M$ of some depth $d$.
Then there exists a rectifier network $\Net'=(V,E',\sources,\sinks)$ with
$|\Net'|\leq|\Net|$ having depth at most $d$
satisfying the following conditions:
\begin{itemize}
\item[i)] whenever the $i_1$th and the $i_2$th row are the same in $M$, then the
sets $\{v\in V:(v,\sinks(i_1))\in E'\}$ and $\{v\in V:(v,\sinks(i_2))\in E'\}$
coincide;
\item[ii)] dually, whenever the $j_1$th and the $j_2$th column of $M$ are the same,
    then
  $\{v\in V:(\sources(j_1),v)\in E'\}=\{v\in V:(\sources(j_2),v)\in E'\}$.
\end{itemize}
\end{lemma}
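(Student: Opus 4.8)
The plan is to \emph{canonicalize} the network in two independent passes: first I rewire the edges entering the sinks so that sinks of equal rows of $M$ acquire identical in-neighborhoods, and then, dually, I rewire the edges leaving the sources. I may assume $d \ge 2$ (for $d \le 1$ every maximal path is a single edge and both claims can be checked by hand). Under $d \ge 2$ there is no edge running directly from a source to a sink, since such an edge would itself be a maximal path of length $1$; hence the set of edges whose head is a sink and the set of edges whose tail is a source are disjoint, so the two passes modify disjoint portions of $E$ and neither can destroy the invariant established by the other.

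For the first pass, group the row indices of $M$ according to the value of the corresponding row. For a class $I$ of mutually equal rows, choose $i^* \in I$ minimizing the in-degree $\lvert\{v : (v,\sinks(i^*)) \in E\}\rvert$, and in $E'$ give every $\sinks(i)$ with $i \in I$ exactly the in-neighborhood that $\sinks(i^*)$ has in \Net, leaving all edges not entering a sink untouched. This immediately yields property~(i). For correctness of the computed matrix, observe that a sink has no outgoing edges, so reachability among the non-sink vertices is identical in \Net and in the new network; in particular, for any non-sink vertex $v$ with $(v,\sinks(i^*)) \in E$, the vertex $v$ is reachable from $\sources(j)$ in the new network iff it is in \Net. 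Since $d \ge 2$, this shows that $\sources(j)$ reaches $\sinks(i)$ in the new network iff it reaches $\sinks(i^*)$ in \Net, i.e.\ iff $M_{i^*,j} = 1 = M_{i,j}$, so the expressed matrix is unchanged. Finally no in-degree of a sink increases (by minimality of $i^*$) and no other edge is altered, so the size does not grow. The second pass is the mirror image, working with the out-neighborhoods of the nodes $\sources(j)$ and the columns of $M$; as noted it touches only edges left untouched by the first pass, so properties~(i), (ii) and the size bound all hold simultaneously at the end.

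The point that needs care is the depth bound: I must rule out the rewiring creating a maximal path with more than $d$ edges. The key observation is that if $v$ is any vertex having an edge into a sink in \Net, then the longest path ending at $v$ in \Net has exactly $d-1$ edges — its first vertex must have in-degree $0$ (otherwise the path could be lengthened), so appending the edge from $v$ to that sink produces a maximal path, which has $d$ edges by hypothesis. Since the first pass leaves all edges among non-sink vertices unchanged, every path of the new network that ends at a non-sink vertex is already a path of \Net and so has length at most $d$ (extend it forward to a sink to see this); and every maximal path of the new network ending at a sink $\sinks(i)$ decomposes as a path ending at some non-sink vertex $v$ followed by the fresh edge $v \to \sinks(i)$, where $v$ lies in the in-neighborhood of $\sinks(i^*)$ in \Net, so the whole path has length at most $(d-1)+1 = d$. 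The symmetric analysis applies to the second pass.

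I expect the only genuinely fiddly part to be the bookkeeping around degenerate configurations — all-zero rows or columns (where the chosen representative has in-degree or out-degree $0$), vertices that are simultaneously sources and sinks, and the boundary cases $d \le 1$ — while the substance of the argument is the reachability/depth analysis sketched above.
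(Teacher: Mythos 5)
Your proof is correct and follows essentially the same route as the paper's: in each equivalence class of equal rows (resp.\ columns) pick the sink (resp.\ source) of minimal degree and copy its neighborhood to the whole class, doing the two passes independently. If anything, you are more careful than the paper about why the depth does not increase and why the two passes do not interfere (the paper only remarks parenthetically that the depth is preserved when the network is strictly levelled, which its definition of depth guarantees).
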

\begin{proof}
Let $v=\sources(j)$ be a source node and let $X_j$ stand for
the set $\{w\in V:(v,w)\in E\}$ of its neighbours.
Since $\Net$ realizes $M$, the set of target nodes $\sinks(i)$ which are
reachable in $\Net$ is exactly the image under $\sinks$ of 
those indices $i$ for which $M_{i,j}=1$. Now for each column index $j$ let $j'$
be the index for which
the $j$th and the $j'$th column of $M$ is the same, $|X_{j'}|$ is the smallest
possible among these sets
and $j'$ is the smallest among these indices. Note that $j'$ is always
well-defined and whenever the $j_1$th and the $j_2$th column
coincide, then $j_1'=j_2'$.

Then, define $\Net_0$ as$ (V,E_0,\sources,\sinks)$ with
$E_0=E-\{(\sources(j),v)\}\cup\{(\sources(j),v):v\in X_{j'}\}$.
(That is, we reattach the edges coming out from sources to the neighbours of
 the representative source of their equivalence class.)

Then by the choice of the values $j'$ (in particular, with $|X_{j'}|$ having
been minimized) we have that i) is satisfied, $\Net_0$ also realizes $M$, the
depth is not increased (if $\Net$ is strictly levelled) and $|\Net'|\leq|\Net|$.
Applying the analogous transformation to the targets we get a network $\Net'$
satisfying ii) as well.
\end{proof}
Thus we get that there exists a depth-$3$ network of minimal size realizing $M_n$ such that
\begin{itemize}
\item each source $\sources(i)$ for $i=1,\ldots,n$ have the same set $X_1$ of neighbours;
\item each source $\sources(i)$ for $i=n+1,\ldots,2n$ have the same set $X_2$ of neighbours;
\item each target $\sinks(j)$ for $j=1,\ldots,n$ have the same set $Y_1$ of neighbours and
\item each target $\sinks(j)$ for $j=n+1,\ldots,2n$ have the same set $Y_2$ of neighbours
\end{itemize}
since the corresponding rows and columns coincide.
In this network there are $n(|X_1|+|X_2|+|Y_1|+|Y_2|)$ edges in total
between the outermost layers (and some additional edges between the two middle
layers. Clearly none of these sets can be empty (since all the rows and columns
are nonzero), and if any of them is a non-singleton set, the size of the
network is at least $5n>4n+3$. So in order to go below $5n$, $X_1=\{x_1\}$,
$X_2=\{x_2\}$ etc. have to be singleton sets. Now since not all rows (columns,
resp.) are equal, $x_1\neq x_2$ and $y_1\neq y_2$ has to hold, and there is
only one choice (because the sets are singletons) to wire the two middle layers
together, namely adding the edges $(x_1,y_1)$, $(x_1,y_2)$ and $(x_2,y_2)$,
giving $4n+3$ edges in total as optimal value for depth $d=3$.

Note that if the network is not required to be strictly levelled, we can merge
$x_1$ with $y_1$ and $x_2$ with $y_1$ and add only the edge $(x_1,x_2)$
reaching the optimal bound $4n+1$.

\section{Upper bound in Corollary~\ref{cor:ft}}

Recall that a \df{$\SUM$-circuit} for a matrix $M$
is the same as an \df{unambiguous} rectifier network:
it is a rectifier network that has
at most one path between any input---output pair.
The smallest size of an unambiguous rectifier network
that expresses $M$ is denoted by $\sumany(M)$;
similarly, $\sumtwo(M)$ is
the smallest size of an unambiguous rectifier network
of depth~$2$ that expresses $M$.
In the same way as rectifier networks of depth~$2$
correspond to rectangle coverings, \emph{unambiguous} rectifier networks
of depth~$2$ correspond
to rectangle \df{partitions} (that is, coverings with no overlap between rectangles).
If one views the matrices
as adjacency matrices of bipartite graphs, then the measures
$\ortwo(\cdot)$ and $\sumtwo(\cdot)$ correspond to minimal biclique coverings and
minimal biclique partitions, respectively.
Clearly, $\orany(M)\leq\sumany(M)$ and $\orparam{d}(M)\leq\sumparam{d}(M)$ for each depth $d$.
Also, if $M=\left(\begin{smallmatrix}M_1&M_2\\M_3&M_4\end{smallmatrix}\right)$,
then $\sumany(M)\leq\sum_{i=1}^4\sumany(M_i)$.

We show below that $\sumtwo(T_n) \le s(n) =
n(\lfloor \log_2 n \rfloor + 2) - 2^{\lfloor \log_2 n \rfloor + 1}$.
Theorem~\ref{th:fft} will then imply that $\ortwo(T_n) = \sumtwo(T_n) = s(n)$.

First, let $J_n$ be the $n\times n$ all-$1$ matrix and $J_{m,k}$
the $m\times k$ all-$1$ matrix. Clearly, $\sumtwo(J_{m,k})$ is $m+k$.
Second, observe that
$T_{2n}=\left(\begin{smallmatrix}T_n&J_n\\0&T_n\end{smallmatrix}\right)$ and
$T_{2n+1}=\left(\begin{smallmatrix}T_n&J_{n,n+1}\\0&T_{n+1}\end{smallmatrix}\right)$.
It follows that $\sumtwo(T_{2n})\leq 2\sumtwo(T_n)+2n$ and
$\sumtwo(T_{2n+1})\leq \sumtwo(T_n)+\sumtwo(T_{n+1})+2n+1$.
This shows, by induction, that $\sumtwo(T_n)\leq s(n)$, since
the induction basis is easily checked.

\section{Optimality of the greedy strategy for Kneser-Sierpi\'nski matrices}

Although Theorem~\ref{th:ks} leaves a gap
between the bounds of $\Omega(n^{1.16})$ and $O(n^{1.17})$ on $\ortwo(D_n)$,
the greedy strategy is, in fact, optimal.
We first give a brief sketch of the argument,
and then fill in all the details below.

Consider the linear relaxation of the set covering formulation
for each $D^{x, y}_{[k]}$.
Note that only maximal rectangles (i.e., those associated with bipartitions)
can participate in optimal fractional coverings.
In fact,
for any $\ell \in [ x, k - y ]$
there exists a fractional covering $\eta(\ell)$ of $D^{x, y}_{[k]}$
which uses only bipartitions into sets of size $\ell$ and $k - \ell$
and for which all ``covering'' constraints in the LP are tight;
it suffices to pick a single $\ell$ since
this fractional covering $\eta(\ell)$ uses
\emph{all} such bipartitions with multiplicity $1 / \binom{k - (x + y)}{\ell - x}$.
Hence, the problem reduces to an unweighted set covering formulation,
where the greedy heuristic achieves a value
within a factor of $1 + \log \binom{k}{x} \binom{k}{y} \le 1 + 2 k = \polylog(n)$ of
the optimum.

In more detail,
first consider an arbitrary weighted set cover problem:
let $S_1,\ldots,S_k\subseteq U$ be the sets, with $w_i>0$ being the cost of $S_i$.
Let $\mu=\min\{\frac{w_i}{|S_i|}:i=1,\ldots,k\}$ be the best cost/utility ratio offered by the sets.
Then, in the dual formulation of its LP relaxation, if one assigns uniformly $\mu$ to each element $u\in U$
of the universe, then each set $S_i$ gets $\mu\cdot|S_i|\leq w_i$ total charge, hence this uniform
distribution is a solution to the dual, hence $\mu\cdot|U|$ is a lower bound for the optimum of the
primal problem by the weak duality theorem.

For the case of the weighted covering by rectangles, a rectangle of size $k\times m$ has
cost $k+m$ and covers $km$ elements, hence its offered ratio is $\frac{k+m}{km}=\frac{1}{k}+\frac{1}{m}$,
i.e. it decreases strictly by increasing either $k$ or $m$, thus the best ratios are always offered by maximal rectangles.

Now considering a rectangle $R$ in a matrix $D^{x,y}_{[k]}$, formed by the rows $X_1,X_2,\ldots,X_k$ and
columns $Y_1,\ldots,Y_m$ we have by definition that each $X_i$ is disjoint from each $Y_j$, thus
choosing $S=\mathop\bigcup\limits_{i=1}^kX_i$ we have that $R$ is a subrectangle of the rectangle
corresponding to the bipartition $(S,\overline{S})$, yielding that only rectangles corresponding to
bipartitions can be maximal. On the other hand, any such rectangle is clearly maximal.
Denoting $|S|$ by $\ell$ we get that the ratio offered by these rectangles is $\mu(k,x,y,\ell)=\frac{1}{\binom{\ell}{x}}+\frac{1}{\binom{k-\ell}{y}}$.
Then setting $\ell^*=\ell^*(k,x,y)=\arg\min_\ell\{\mu(k,x,y,\ell):x\leq\ell,y\leq k-\ell\}$ is the parameter of those rectangles offering
the best possible ratio $\mu^*=\mu(k,x,y,\ell^*)$ for $D^{x,y}_{[k]}$.
Thus, $\mu^*\cdot||D^{x,y}_{[k]}||=\frac{\binom{\ell^*}{x}+\binom{k-\ell^*}{y}}{\binom{\ell^*}{x}\binom{k-\ell^*}{y}}\bigl(\binom{k}{x}\binom{k-x}{y}\bigr)$
is a lower bound for the cost of the optimal solution.

Observe that this bound is indeed attainable by the greedy strategy,
since each set $(X,Y)$ with $|X|=x$ and $|Y|=y$, $X\cap Y=\emptyset$ is covered exactly by 
$\binom{k-x-y}{\ell^*-x}$ such rectangles (i.e. respects this number of such bipartitions),
thus considering the fractional covering $\eta(\ell^*)$ which uses
\emph{all} such bipartitions with multiplicity $1 / \binom{k - (x + y)}{\ell^* - x}$ we get a covering of $D^{x,y}_{[k]}$,
with total cost $\frac{1}{\binom{k - (x + y)}{\ell^* - x}}\bigl(\binom{\ell^*}{x}+\binom{k-\ell^*}{y}\bigr)\binom{k}{\ell^*}$
(that is, multiplicity$\times$weight of a rectangle$\times$number of these rectangles).
The last expression is the same as
$\frac{\binom{\ell^*}{x}+\binom{k-\ell^*}{y}}{\binom{\ell^*}{x}\binom{k-\ell^*}{y}}\bigl(\binom{k}{x}\binom{k-x}{y}\bigr)$,
since $\binom{k}{x}\binom{k-x}{y}\binom{k - (x + y)}{\ell^* - x}=\binom{k}{\ell^*}\binom{\ell^*}{x}\binom{k-\ell^*}{y}$:
both of these products calculate the number of possibilities to choose an $\ell^*$-element subset $L$ of a $k$-element set $K$,
and an $x$-element subset $X$ of $L$ as well as an $y$-element subset of $K-L$. The first formula achieves this by
choosing $X$ from $K$ first, then $Y$ from $K-X$, finally $L-X$ from $K-X-Y$, the second one by
choosing $L$ from $K$ first, then $X$ from $L$ and finally $Y$ from $K-L$. Thus, choosing all these bipartitions
with this multiplicity provides an optimal solution.

Note that for any fixed $\ell$, the weighted set covering problem
using only the bipartitions $(S,\overline{S})$ with $|S|=\ell$ is a uniform-cost,
i.e., an unweighted set covering problem.
On such a problem the greedy heuristic achieves a value
within a factor of $1 + \log \binom{k}{x} \binom{k}{y} \le 1 + 2k = \polylog(n)$ of
the optimum in the linear relaxation.
Therefore, it suffices to pick some $\ell$ and construct a greedy covering
using bipartitions into sets of size $\ell$ and $k - \ell$.
Our choice of $\ell$ in Lemma~\ref{l:drop-polylog} is $\ell = x + (k - x - y) / 2$,
and the argument above shows that the optimal choice, $\ell = \ell^*(k; x, y)$
will deliver an upper bound on $\ortwo(D_n)$ that is tight up to a polylogarithmic
factor, thus reducing the problem to a parametric optimization task.


\section{Application: size of regular expressions}

A {\it regular expression} over $\Sigma$ is a well-formed expression $r$
consisting of the symbols
$$ \epsilon, \emptyset, {\mbox{\tt (}}, {\mbox{\tt )}}, \mbox{\tt +}, \mbox{\tt *}, 
\text{ and } a \in \Sigma ,$$
with the usual semantics (e.g., as in \cite{Hopcroft}).

The {\it size} of a regular expression $r$ can be specified in a number
of different ways, but for our purposes, the easiest is the so-called
{\it alphabetic length}, which is the number of symbols in $r$ belonging
to $\Sigma$ \cite{Lee}.  For example, the alphabetic length of
\begin{equation}
r = a_0 a_1 + a_2 a_3 + (a_0+a_1)(a_2+a_3)
\label{ree}
\end{equation}
is $8$.

Given a regular language $L$ specified in some way (for example,
as the language accepted by a finite automaton),
it is, in general, quite difficult to
determine the size of the shortest regular expression specifying
$L$.  In fact, this problem is PSPACE-hard \cite{Meyer,Jiang} and
not even approximable within a factor of $o(n)$ \cite{Gramlich}
(unless P = PSPACE).

\subsubsection*{Extended example}

In this subsection we examine a specific family of finite languages, namely
$$L_n = \sum_{0 \leq i < j < n} a_i a_j,$$
over the alphabet $\Sigma_n = \{ a_0, a_1, \ldots, a_{n-1} \}$ of
size $n$, 
and we provide matching upper and lower bounds on
for the size of the shortest regular expression for it.
For example, for $n = 4$ this is the language
$$ L_4 = \{ a_0 a_1, a_0 a_2, a_0 a_3, a_1 a_2, a_1 a_3, a_2 a_3 \}.$$
Evidently one can produce a regular expression for $L_n$ of
length $n(n-1)$ by listing the elements of $L_n$, but it is possible
to do much better.  For example, the regular expression given in
\eqref{ree} specifies $L_4$ with alphabetic length $8$, as opposed to
length $12$ using the brute-force approach.

Our upper and lower bounds follow Corollary~\ref{cor:ft} in the main text.
For the lower bound, we relate the alphabetic length of regular expressions
to the cost of coverings of Boolean matrices;
for the upper bound, we provide a direct proof
to make the connection between regular expressions and coverings
more transparent.


We first show how to construct a small regular expression
for $L_n$ through a simple divide-and-conquer strategy.  We generalize $L_n$
to $L_{A,B} = \bigcup_{A \leq i < j \leq B} a_i a_j$ so that
$L_n = L_{0, n-1}$.  Then our divide-and-conquer solution is given
by
$$ L_{A,B} = L_{A,C} \ \cup\ L_{C+1,B} \ \cup\  
	\lbrace a_A + a_{A+1} + \cdots + a_C \rbrace \cdot
	\lbrace a_{C+1} + \cdots + a_B \rbrace ,$$
where $C = \lfloor (A+B)/2 \rfloor$.
The alphabetic length $t(n)$ of the regular expression so constructed satisfies the
recurrence
$t(1) = 0$ and $t(2n) = 2t(n) + 2n$ and
$t(2n+1) = t(n+1) + t(n) + 2n+1$.  
Now an easy induction proves that in fact $t(n) = s(n)$,
with
$s(n) = n(\lfloor \log_2 n \rfloor + 2) - 2^{\lfloor \log_2 n \rfloor + 1}$.

We now turn to the lower bound.
Let $r_n$ be a regular expression of shortest length for $L_n$ for $n \geq 2$.
Clearly we can assume that $r_n$ contains no occurrence of the
empty set symbol $\emptyset$.
Since $L_n$ is finite, we can also assume $r_n$ contains no occurrence
of $\mbox{\tt *}$.  So all the operators in $r_n$ are either union
or concatenation.  Consider any instance of concatenation, say 
$L_1 L_2$.  Then if either $L_1$ or $L_2$ contains strings of two
different lengths, the resulting concatenation would also, which is
impossible since $L_n$ contains only strings of length $2$.  So all
strings on one side of any concatenation are of the same length.
On the other hand, no strings can be of length $3$ or more, and if
one side contains only strings of length $0$ (the empty string) we could
simply omit the concatenation.  So in fact we may assume, without
loss of generality that any concatenation in $r_n$ looks like
$R\cdot C$, where both languages consist of subsets of $\Sigma_n$.  
Finally, every letter in $C$ must be numbered higher than all those
of $R$, for otherwise we would obtain a word not in $L_n$.
This means that we can write $r_n$ as
\begin{equation}
R_1 \cdot C_1 + R_2 \cdot C_2 + \cdots + R_t \cdot C_t 
\label{re}
\end{equation}
where we have inserted dots to make the concatenation explicit.
The alphabetic length of this expression is 
$\sum_{1 \leq i \leq t} (|R_i|+ |C_i|).$

We now create an integer program to minimize this length.
Define $\isetn = \lbrace 0,1, \ldots, n-1 \rbrace$ and
let $x_{R, C}$ for nonempty sets $R, C \subseteq \isetn$
be an indicator variable for the presence of the
term $R\cdot C$ in the expression \eqref{re}:  $1$ if it is present
and $0$ otherwise.  Our integer program is

\bigskip

\noindent\fbox{%
\begin{minipage}{4.5in}
\noindent minimize $\sum_{ {{R,C \text{ nonempty} } \atop {R, C \subseteq \isetn}}
	\atop{\max R < \min C}} (|R|+|C|) x_{R, C}$ \\
subject to the constraints \\
$x_{R,C} \in \{ 0,1 \}$ for nonempty $R, C \subseteq \isetn$ and
	$\max R < \min C$  \\
$\sum_{{i \in R}\atop{j \in C}} x_{R,C} \geq 1$ for nonempty  $R, C \subseteq \isetn$ and
        $\max R < \min C$ . 
\end{minipage}
}

\bigskip

The last constraint means that every string $a_i a_j$ with $i < j$ is covered
by at least one concatenation of sets.   Note that we write ``$\geq 1$'' in the
last group of inequalities instead of ``$=1$'', because we are not insisting
that our regular expression be unambiguous.

For example, if $n = 3$ then the integer program is

\bigskip

\noindent\fbox{%
\begin{minipage}{3.5in}
\noindent minimize $2 x_{0,1} + 2 x_{0,2} + 2 x_{1,2} + 
	3 x_{01,2} + 3 x_{0,12}$ \\
subject to the constraints \\
$x_{0,1}, x_{0,2}, x_{1,2}, x_{01,2}, x_{01,2} \in \{ 0,1 \} $ \\
$ x_{0,1} + x_{0,12} \geq 1$ \\
$ x_{0,2} + x_{01,2} + x_{0,12} \geq 1$ \\
$ x_{1,2} + x_{01, 2} \geq 1$.
\end{minipage}}
\bigskip

It is not difficult to see that our integer program, in fact,
is the weighted set covering formulation, from Section~\ref{s:techniques},
where the optimal value is $\ortwo(T_n)$ with $T_n$ the $n \times n$
full triangular matrix, as in Section~\ref{s:lt}.
So we can conclude from Corollary~\ref{cor:ft} that
the smallest alphabetic length of a regular expression for
the language $L_n$ is
$s(n) = n(\lfloor \log_2 n \rfloor + 2) - 2^{\lfloor \log_2 n \rfloor + 1}$.

\bigskip

In what follows, we illustrate the approach taken in the main text
by formulating the linear relaxation of the integer program above
and taking its dual. This follows Figure~\ref{f:prog} in Section~\ref{s:techniques}.

The integer program above is an instantiation of
the one in Figure~\ref{f:prog:integer}.
We now relax the constraints on the $x_{R,C}$ to be
$0 \leq x_{R,C} \leq 1$.  The dual linear program then has
variables $y_{i,j}$ corresponding to the string $a_i a_j$,
for $0 \leq i < j < n$; compare to Figure~\ref{f:prog:linear}.
The corresponding dual, as in Figure~\ref{f:prog:dual}, is 

\bigskip

\noindent\fbox{%
\begin{minipage}{5in}
\noindent maximize $\sum_{0 \leq i < j < n} y_{i,j}$ \\
subject to the constraints \\
$y_{i,j} \geq 0$ for $0 \leq i < j < n$ \\
$ \sum_{{i \in R}\atop{j \in C}} y_{i,j} \leq |R|+|C|$ for
nonempty $R, C \subseteq \isetn$ and $\max R < \min C$.  \\
\end{minipage}}

\bigskip

For example, for $n = 3$ the corresponding dual is

\bigskip

\noindent\fbox{%
\begin{minipage}{2.0in}
\noindent maximize $y_{0,1}+y_{0,2}+y_{1,2}$  \\
subject to the constraints \\
$y_{0,1} \geq 0$ \\
$y_{0,2} \geq 0$ \\
$y_{1,2} \geq 0$ \\
$y_{0,1} \leq 2$ \\
$y_{0,2} \leq 2$ \\
$y_{1,2} \leq 2$ \\
$y_{0,1}+y_{0,2} \leq 3$ \\
$y_{0,2}+y_{1,2} \leq 3$.
\end{minipage}}

\subsubsection*{General connection}

Whenever $L\subseteq\Sigma\Delta$ for the alphabets $\Sigma=\setm$ and $\Delta=\setn$,
and $M_L$ is its characteristic $m\times n$ matrix $M_{i,j}=1$ iff $ij\in L$,
then the following statements hold:
\begin{enumerate}
\item The value $\OR_2(M_L)$ coincides with the smallest possible alphabetic length
of a regular expression for $L$.
\item The value $\OR_2(M_L)$ also coincides with the size
of the smallest $\varepsilon$-free nondeterministic finite automaton (NFA) recognizing $L$.
\item The value $\OR(M_L)+m+n$ is an upper bound on the size
of the smallest
nondeterministic finite automaton with possible $\eps$-transitions ($\eps$-NFA)
recognizing $L$.
\end{enumerate}

The proof of the first statement follows the example above,
and the last two statements can be found in~\cite{ivanlelkesnagygyorgyturan14}.

\end{document}